\newcommand{\red}[1]{\textcolor{black}{#1}}
\newcommand{\blue}[1]{\textcolor{black}{#1}}
\newcommand{\erase}[1]{}
\def\0{{\mathbf 0}}
\def\1{{\mathbf 1}}
\newcommand{\topk}{top-$k$\xspace}
\newcommand{\len}{\mbox {$\ell$}\xspace}
\newcommand{\DP}{Dynamic Programing\xspace}
\newcommand{\Rank}{SAA\xspace}
\newcommand{\ClassicRJ}{{Rank Join}\xspace}
\newcommand{\RJ}{{\sc NextHeavyPath}\xspace}
\newcommand{\Main}{{\sc RSA}\xspace}
\newcommand{\hpp}{HPP\xspace}
\newcommand{\nop}[1]{}
\newcommand{\eat}[1]{}
\newcommand{\para}[1]{\medskip\noindent {\bf #1.}}
\newcommand{\wmax}{\mbox {$w_{max}$}}
\newtheorem{myprop}{\textbf{PROPOSITION}}
\newtheorem{mytheo}{\textbf{THEOREM}}
\newtheorem{mylem}{\textbf{LEMMA}}
\newtheorem{myfact}{\textbf{FACT}}
\newtheorem{myobs}{\textbf{LIMITATION}}
\newtheorem{myopt}{\textbf{OPTIMIZATION}}
\newlength{\figwidth}
\newlength{\figthree}
\newlength{\figfour}
\newlength{\figfours}
\newfont{\mycrnotice}{ptmr8t at 7pt}
 \newfont{\myconfname}{ptmri8t at 7pt}
\begin{document}

\title{Finding HeavyPaths in Weighted Graphs and a Case-Study on Community Detection}

\author{
Mohammad Khabbaz\\
       \affaddr{Personal Business Development}\\
       \email{mohammmad@gmail.com}
}

\maketitle

\begin{abstract}

A heavy path in a weighted graph represents a notion of connectivity and ordering that goes beyond
two nodes. The heaviest path of length $\ell$ in the graph, simply means a sequence of nodes with edges between them,
such that the sum of edge weights is maximum among all paths of length $\ell$. It is trivial to state the heaviest edge in the graph is the heaviest path of length $1$, that
represents a heavy connection between (any) two existing nodes. This can be generalized
in many different ways for more than two nodes, one of which is finding the heavy weight paths in the graph.
In an influence network, this represents a highway for spreading information from a
node to one of its indirect neighbors at distance $\ell$. Moreover, a heavy path implies
an ordering of nodes. For instance,
we can discover which ordering of songs (tourist spots) on a playlist (travel itinerary) is
more pleasant to a user or a group of users who enjoy all songs (tourist spots) on
the playlist (itinerary). This can also serve as a hard optimization problem,
maximizing different types of quantities of a path such as score, flow, probability
or surprise, defined as edge weight. Therefore, if
one can solve the Heavy Path Problem (HPP) efficiently, they can as well use HPP for modeling
and reduce other complex problems to it.

More precisely, we aim at finding $k$ heaviest (top-$k$) paths of a given
length $\ell$. The weight of a path is defined as a monotone
aggregation of individual edge weights using functions such as sum or product.
We argue finding simple paths is way more practical than finding
paths with cycles in applications such as: routing, playlist recommendation, itinerary planning
and influence maximization. Avoiding cycles results in lists without repeated
items and with better diversity. Simple paths are also expected to have a higher utility in practice.
This makes \hpp NP-hard and inapproximable. We propose an efficient algorithm
that despite its \emph{exponential} (theoretical) \emph{worst case} running time, achieves the
exact answer of the NP-hard problem in many useful cases
and study the problem from different perspectives.

We compare our main algorithm, Repeated Sorted Access (RSA), against baseline and state-of-the-art
algorithms. We show with experiments
that following our approach is significantly more scalable than
existing algorithms for solving \hpp.
We conduct a comprehensive set of experiments
on four graphs that inherit characteristics of real life applications.
This supports our arguments regarding
scalability of \Main in practice. We make all of our implementations as well as
graphs publicly available to enable reproducibility of our experiments. Our focus
is on solving the technical problem in the first segment of the paper. Following
our technical presentation, we provide a novel case study on community detection
to illustrate the usefulness of solving HPP with a novel application.

\end{abstract}
\vspace{-10pt}
\section{Introduction}
\label{sec:intro}
Graphs are
convenient data structures for modeling real life data such as
social networks, citation networks, the web, phone call and email histories,
purchase/transaction records and user actions~\cite{Broder:2000}.
It is easy to model data with graphs. However, we
need sophisticated tools to make sense
of such data. In order to do so, we need to
study notions such as connectivity~\cite{Broder:2000},
centrality~\cite{Kang:11} and density~\cite{Gibson:05}. Graph
algorithms are at the core of community detection, for
discovering various types of groups based on previous interactions
of individuals~\cite{Santo:09}. Random walk based methods have been
used in many mining and graph applications.
Mining interesting multi-relational patterns is solved as discovering maximal
pseudo cliques in k-partite graphs~\cite{Spyropoulou:09}.

A problem that has received relatively less attention in the \emph{research
community} is that of finding $k$ cycle-free heaviest paths
of length $\ell$ in a weighted graph (\hpp). Sum and
product are examples of monotone aggregation functions that define
the aggregate weight of a path.
\eat{
We further assume edge weights are either all positive or all negative~\footnote{Negative weights help with transforming probabilities to log values and converting product to sum. In this case all the edge weights are negative since they
are logs of numeric values between $0$ and $1$.}.
}
Generating lists of
items of interest is a great practical example. \eat{In such applications
we do not mind where the list starts or ends. We assume all items are
of interest, and are looking for a list of highly connected items.}
Suppose edge probability $w_{(u,v)} = P(v|u)$, the conditional
transition probability that a random user would like to listen to $u$
(as next song) given that $v$ is the current song.
Also assume there is equal probability for any user to like any song.
\eat{This simply means the database
of songs consists of only the "good" songs. We can generalize the model
to incorporate node probabilities proportional to user interests as well.
In this paper, we assume equal probability for all nodes for the sake
of the simplicity of our technical presentation.}
Using product as aggregation function,
the weight of a path is directly proportional to the probability that a cycle free \emph{"chain"}
of songs appears one after another in a listening session. A heavy path in such
graph, represents a playlist with high overall transition probability between songs.
The graph can be created according to user(s) listening history. We can imagine other kinds
of lists such as reading lists of books or papers that show flows of ideas through
publications.

In influence networks, recently there have been attempts for sparcification~\cite{Sparsification:11}.
A path with high probability is a highway for spreading influence in an influence network
with transition probabilities on edges. A more principled approach takes into account
only heavy paths rather than technicalities and complex problem definitions. A complete study of all of \hpp applications goes beyond
the scope of this paper. We need the preliminary work done in this paper,
in order to aim at creating a toolbox that uses heavy paths for network analysis.
Here, we show a possible novel application for
community detection in research communities in a co-authorship network. We use maximal frequent sub-paths in top-$k$
heavy paths in order to discover the \emph{cores} and further \emph{"grow"} each core with
peripheral nodes using existing edges in top paths. We enhance our observations with meta-data, and show that
the core communities found using our approach unanimously report highly influential researchers.

We make the following contributions in the subsequent sections:
\begin{itemize}
\item We formalize and motivate the \hpp problem and show it is
NP-hard and inapproximable. We discuss four exact algorithms, and
provide a comparison of classical vs. novel algorithms for solving HPP.
We provide a comprehensive experimental study of HPP algorithms presented in Section~\ref{sec:expn}.

\item We present a case study to discover small but influential networks of people in research communities.
We do this analysis on a sub-graph of DBLP co-authorship graph, and report the results
in Section~\ref{sec:community}.

\item We show how a well studied top-$k$ query processing problem in the database
 literature can be converted to HPP on graphs where there is no database. We solely focus
 on algorithmic aspects, and propose a more scalable algorithm that overcomes the shortcommings
 of the Rankjoin~\cite{Ilyas:04} algorithm.

\end{itemize}


\section{Technical Problem Definition}
\label{sec:probdef}
\begin{figure*}[t]
  \centering
  \subfigure[]{
  \includegraphics[scale = 0.7, trim = 0mm 0mm 0mm 20mm]{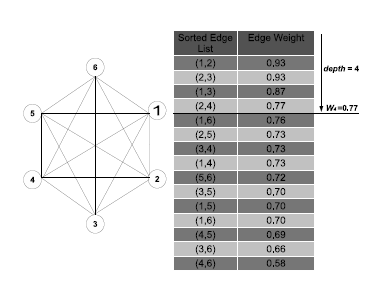}
  \vspace{-10pt}
  \label{fig:cliqueGraph}
}
\subfigure[]{
\includegraphics[scale = 0.7, trim = 0mm 0mm 0mm 50mm]{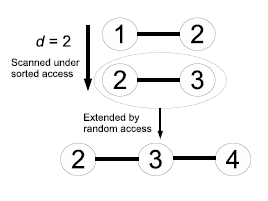}
\label{fig:depth2}
}\label{fig:access}
\vspace{-10pt}
\caption{\scriptsize{(a) A complete graph with 6 nodes and sorted list of edges. (b) An example of random access is shown on the right.}}
\vspace{-10pt}
\end{figure*}
The most similar well studied problem to HPP in the literature is finding top-$k$
heavy paths on directed \len-partite graphs.
A nice example is \emph{$k\ell$-Stable Clusters} problem~\cite{Bansal:07}.
Absence of cycles makes the studied problem considerably simpler than HPP.
Furthermore, practical
applications of HPP are not limited to \len-partite graphs as described
in Section~\ref{sec:intro}. We show
HPP is NP-hard and propose a practical algorithmic framework
for solving it in the general case. It is common to use exponential time but efficient
algorithms for mining purposes. Great examples are \textbf{\emph{Apriori}} and \textbf{\emph{FP-Growth}}~\cite{FPGrowth}, that is
used for frequent item-set mining. Heavy paths also represent lists of highly connected items
that can be used in similar ways to frequent item-sets and also for ordering of the items in the set.
\eat{
Frequent item-set is a stronger notion than a clique, therefore the frequent item-item graph
will contain a clique for sure. A heavy path that goes through the nodes of the clique
can also order items in each frequent item-set.
}

Given a weighted graph $G(V,E,W)$, where
weight $w_{(u,v)}$ represents the edge weight between $u$ and $v$,
we aim at finding the $k$ heaviest simple paths of length $\len$, with
highest overall path weights. We call this the {\em Heavy Path Problem (\hpp)}.

A simple path of length \len is a sequence of
nodes $(v_0, \ldots,
v_{\ell})$ such that $(v_i, v_{i+1})\in E$, $0\leq i < \len$, and there are no
cycles, i.e., the nodes $v_i$ are distinct.
Unless otherwise specified, in the rest of the paper, we use the term
path to mean simple path. We focus on sum and product, as aggregation
functions that define the weight of a path. For simplicity of presentation and practical considerations, we present using sum. We define weight of $P = (v_0, ..., v_\ell)$, as
$P.weight = \sum_{j=0}^{\ell-1} w_{(j,j+1)}$. Furthermore, we know we can
convert product to sum using log values and this does not change the order
of the list of top-$k$ heaviest paths.

Finding the heaviest path for a given parameter \len is equivalent to the \len-TSP
problem, defined as follows: Given a weighted graph, find a path of \textbf{\emph{minimum}} weight that passes through
any $\len+1$ nodes. It is easy to see that for a given length \len,
a path $P$ is a solution to \len-TSP
iff it is a solution to \hpp (with $k=1$) on the same graph but with
edge weights \erase{appropriately} modified as follows: let
$w_{(u,v)}$ be the weight of an edge $(u,v)$ in the \len-TSP instance;
then the edge weight in the \hpp instance is
$w'_{(u,v)}=C-w_{(u,v)}$, where $C$ is any constant.
This reduction works in both ways given that we use the same constant $C$.
We are not
concerned with negative edge weights since we deal with paths without cycles; but, we
focus on cases where edges are either all negative or all positive for simplicity.
{\sl It is well known that \len-TSP is NP-hard (as a version of TSP)}. Furthermore, \len-TSP
does not have any bounded approximation in the general case i.e. no triangle inequality on
edge weights etc~\cite{mst1, mst2}.
\begin{myprop}
HPP does not have any bounded approximation in polynomial time.
\end{myprop}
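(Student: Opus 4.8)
The plan is to reduce $\ell$-TSP to HPP in an approximation-preserving way, using the reduction already spelled out in the excerpt, and then invoke the known inapproximability of the general-metric (i.e.\ no triangle inequality) TSP. The key observation is that a multiplicative approximation guarantee for HPP would translate into a corresponding \emph{additive} control on the $\ell$-TSP objective, and because the additive constant $C$ in the transformation $w'_{(u,v)} = C - w_{(u,v)}$ is arbitrary, that additive control can be made as tight as we like — tight enough to force an exact optimum. Since exact $\ell$-TSP is NP-hard (as noted in the excerpt), this rules out any polynomial-time bounded-ratio approximation for HPP unless $\mathrm{P}=\mathrm{NP}$.

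Concretely, I would argue by contradiction. Suppose there were a polynomial-time algorithm $\mathcal{A}$ that, on every HPP instance, returns a path of length $\ell$ whose weight is at least $\rho \cdot \mathrm{OPT_{HPP}}$ for some fixed $\rho \in (0,1)$. Take an arbitrary $\ell$-TSP instance with edge weights $w_{(u,v)}$, let $W = \sum_{(u,v)\in E} |w_{(u,v)}|$ be an upper bound on the magnitude of any path weight, and apply the transformation with a large constant $C$; every transformed edge weight $w'_{(u,v)} = C - w_{(u,v)}$ is positive, and every path of length $\ell$ has HPP-weight $P.weight' = (\ell)C - P.weight$, where $P.weight$ is its $\ell$-TSP weight. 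Thus $\mathrm{OPT_{HPP}} = \ell C - \mathrm{OPT_{\ell\text{-}TSP}}$, where $\mathrm{OPT_{\ell\text{-}TSP}}$ is the \emph{minimum} $\ell$-TSP weight. Running $\mathcal{A}$ returns a path $P$ with $\ell C - P.weight \ge \rho(\ell C - \mathrm{OPT_{\ell\text{-}TSP}})$, i.e. $P.weight \le \mathrm{OPT_{\ell\text{-}TSP}} + (1-\rho)\,\ell C - (1-\rho)\mathrm{OPT_{\ell\text{-}TSP}}$. The slack term is at most $(1-\rho)\ell C + (1-\rho)W$; crucially it does \emph{not} shrink as we change $C$ — in fact it grows. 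So the naive choice of large $C$ is the wrong knob.

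The correct move is to shift and scale the \emph{other} way: instead of making all transformed weights positive by a large additive constant, first translate the original $\ell$-TSP weights so that $\mathrm{OPT_{\ell\text{-}TSP}}$ is pinned near $0$ relative to the spread, or equivalently scale the instance so that the gap between the optimal tour and the second-best tour is enormous compared to $\mathrm{OPT}$ itself. Since the inapproximability results for general TSP (\cite{mst1, mst2}) actually say it is NP-hard even to \emph{decide} whether a Hamiltonian-type path of weight below a threshold exists, I would instead phrase the reduction from that gap/decision version directly: a $\rho$-approximation to HPP on the transformed instance would distinguish the "yes" instances (small optimal $\ell$-TSP weight, hence large $\mathrm{OPT_{HPP}}$) from the "no" instances, provided $C$ is chosen so that $\rho$ times the "yes" value still exceeds every achievable weight in the "no" case — and by taking $C$ large this separation holds for \emph{any} fixed $\rho>0$, because $\mathrm{OPT_{HPP}} \approx \ell C$ in both cases but the lower-order $P.weight$ term carries the gap, which $C$ can be made to dwarf in the multiplicative sense.

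I expect the main obstacle to be exactly this bookkeeping: getting the quantifiers on $C$, $\rho$, and the yes/no gap to line up so that a \emph{multiplicative} HPP guarantee yields an \emph{exact} (or gap-) decision for $\ell$-TSP. Once the reduction is stated as going from the NP-hard \emph{gap} version of general-metric TSP — rather than from optimization TSP — the choice of $C$ does the work cleanly, and the rest is the routine observation that the transformation is polynomial-time and preserves path length $\ell$. A secondary point to handle carefully is the sign convention (all-positive vs.\ all-negative edge weights): the paper restricts to uniform-sign instances, so I would verify that the transformed instance can be taken all-positive, which it can by choosing $C > \max_{(u,v)} w_{(u,v)}$, consistent with the earlier remark in the excerpt.
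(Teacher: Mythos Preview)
Your diagnosis of the first attempt is exactly right: shifting by a large $C$ introduces a slack of order $(1-\rho)\,\ell C$ in the HPP objective, and this swamps any fixed additive gap in the underlying $\ell$-TSP instance. The problem is that you then repeat the same error in the second attempt. You claim that ``by taking $C$ large this separation holds for any fixed $\rho>0$,'' but write the inequality out: in a yes instance the approximation guarantees HPP weight at least $\rho(\ell C - T)$, while in a no instance every path has HPP weight at most $\ell C - T'$; distinguishing the two requires $\rho(\ell C - T) > \ell C - T'$, i.e.\ $T' - \rho T > (1-\rho)\,\ell C$. The right-hand side \emph{grows} with $C$, so large $C$ destroys the separation rather than creating it. Worse, since you need $C > \max_{(u,v)} w_{(u,v)}$ to keep the HPP instance all-positive, and the gap parameter $T'$ in the standard Hamiltonian-path reduction to $\ell$-TSP is itself of order $\max w$, you cannot inflate $T'$ without dragging $C$ up with it; the inequality never closes for any $\rho$ bounded away from $1$.

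The paper sidesteps this entirely by taking $C = 0$. Then the HPP instance has all-\emph{negative} edge weights (which the paper explicitly allows), and maximizing a sum of negative weights is literally minimizing the sum of the original positive weights --- the HPP instance \emph{is} the $\ell$-TSP instance with a sign flip, no additive offset at all. A bounded-ratio approximation for HPP on this instance becomes, after negation, a bounded-ratio approximation for $\ell$-TSP, and the cited inapproximability of general-weight $\ell$-TSP finishes the argument. The additive $\ell C$ term that wrecks your approach simply never appears. (The paper's own write-up has a slip in stating $0<\alpha<1$ for a maximization problem whose optimum is negative, but the mechanism is sound once the ratio convention is straightened out.)
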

\begin{proof}
\scriptsize{
We know \len-TSP is not approximable. Remember the $C - w$ reduction between
\len-TSP($G1$) and \hpp($G2$) introduced earlier. Suppose $C = 0$ and $k = 1$, if $P$ is
the answer to \hpp (the heaviest path) on $G2$, it is the answer to \len-TSP on $G1$
after $-w$ transformation on edge weights. The weight of $P$ in $G1$ ($-W_{max}$) is the negative of its
weight in $G2$ ($W_{max}$). If we find an $\alpha$ ($0<\alpha<1$) approximation to $P$
on $G2$ as a result of finding path $Q$ in polynomial time we have $Q.W \ge \alpha \times W_{max}$.
On $G1$ graph equivalently we can say, $-Q.W \le \alpha \times -W_{max}$. Since $-W_{max}$ is the
optimal answer on $G1$ graph (i.e. minimum weight among all paths), it is guaranteed that $-Q.W$ is
in the range $-W_{max} \le -Q.W \le \alpha \times -W_{max}$ and is a true approximate answer to \len-TSP.
We know \len-TSP is not approximable. According to the above argument, approximability of \hpp results in approximability of
\len-TSP. This is in contradiction with the known fact that \len-TSP and TSP are not approximable in
the general case.
}
\end{proof}

\eat{
Therefore,
we do not expect that HPP is approximable either. In this paper, we focus
on exact algorithms for solving HPP and leave the study of its hardness
to our future work.
In general, HPP can be defined for both directed and undirected graphs.
Our algorithms and results focus on the undirected case,
but can be easily extended to directed graphs. We further focus
on the problem of finding $k$ heaviest paths which makes HPP different and
more general than \len-TSP.
}
\eat{
\begin{figure}[t]
  \centering
  \includegraphics[scale=1]{completeGraph}
    \vspace{-10pt}
  \caption{\scriptsize{A complete graph with 6 nodes and list of edges sorted by edge weight.}}
  \label{fig:ex1}
  \vspace{-10pt}
\end{figure}
}

\eat{
\begin{figure}[t]
  \centering
  \subfigure[]{ 
  \includegraphics[scale = 0.75]{figs/cliqueEg}
  \label{fig:cliqueGraph}
}
\subfigure[]{  
\begin{minipage}[b]{0.4\linewidth}
   \tiny
   \begin{tabular}{|r|c|} \hline
    Edge & Edge Weight \\ \hline
    (1,2)& 0.93 \\
    (2,3)& 0.93 \\
    (1,3)& 0.87 \\
    (2,4)& 0.77 \\
    (1,6)& 0.76 \\
    (2,5)& 0.73 \\
    (3,4)& 0.73 \\
    (1,4)& 0.73 \\
    (5,6)& 0.72 \\
    (3,5)& 0.70 \\
    (1,5)& 0.70 \\
    (1,6)& 0.70 \\
    (4,5)& 0.69 \\
    (3,6)& 0.66 \\
    (4,6)& 0.58 \\  \hline
    \end{tabular}
 \vspace{-3pt}
\end{minipage}
\label{tab:cliqueWts}
}
\vspace{-10pt}
\caption{\scriptsize{A complete graph with 6 nodes and sorted list of edges by edge weight.}}
\label{fig:Clique}
\vspace{-10pt}
\end{figure}
}

\eat{Even when the graph is a clique and any permutation of 5 nodes will result in a path of length 4,
the order in which the nodes are visited can make a difference to the overall
weight of the path.}
In Figure~\ref{fig:cliqueGraph}, the heaviest simple path of length $4$,
is obtained by visiting the nodes in the order 6--1--2--3--4 and has a weight of 3.35.
In contrast, a different permutation of the nodes 4--3--6--1--2 has a weight of 3.08. The higher
weight of a path can represent a sequence with more strength or better quality. Different
permutations result in paths with different weights simply because paths consist of different edges most
likely with different edge weights.

\section{Algorithms}
\label{sec:algo}
\subsection{Baselines}
An obvious algorithm for finding the heaviest paths of length \len
is performing a depth-first search (DFS) from each node, with the search
limited to a depth of \len, while maintaining the
$k$ heaviest paths and avoiding cycles. This is an exhaustive algorithm and is not expected to scale.
A somewhat better approach is dynamic programming. Held and Karp~\cite{Held:62} proposed a
dynamic programming algorithm for TSP which we adapt to HPP as follows. Since path lengths
required in our problem are typically much smaller than the total number of nodes in the graph,
we replace the notion of "allowed nodes" with "avoiding nodes". The idea is to find the heaviest
$J$-avoiding path of length $\ell-1$ that ends in one of the neighbors of $J$. We can repeat
this recursively during the execution using smaller path lengths and larger avoidance sets as parameters.
The base case is where the input parameter assigned to length is $1$, in which we only
need to consider the paths created by extending the current path by one hop, making
sure we do not extend by any nodes that belong to the ($\ell - 1$) size avoidance set. The rest
is similar to DFS. We repeat this for every node($J$) and keep track of the heaviest path. Dynamic
Programming algorithm aggregates and discards many short path segments early on by choosing
the heaviest one that ends at a certain node. Both DFS and Dynamic Programming can be easily generalized
to report top-$k$ paths which is not the focus of this paper.
\subsection{Sorted Access Algorithm (SAA) for \hpp}
\label{sec:RJ}
The methods discussed in the previous section lack the ability to prune
the search space using edge weights. It may be more promising to first
look at those parts of the graph with heavier edge weights. If they
connect, they can create heavy paths. This way, we solve HPP only for a
smaller sub-graph or at least explore the search space more efficiently. This
can make a considerable difference to the exponentially increasing running time
of the problem using smaller input size.
There exists a body of work on \topk algorithms in
the database community started by Fagin et al.~\cite{TA}. These algorithms
aim at finding top-$k$ items without searching the whole search space
using \emph{sorted access}. The framework is such that partial item scores
from multiple sorted lists ($l_1 ... l_{\ell}$), are aggregated
using a monotone aggregation function. Following their idea more aggressive \topk algorithms were
proposed in order to extend their framework with probabilistic guarantees such as~\cite{topk-prob}.
We can argue the weight of a path (item), is the aggregation of several partial
item scores (edges). This makes following top-$k$ query processing ideas
promising for solving HPP. In particular, in \ClassicRJ~\cite{Ilyas:03,Schnaitter:10} problem,
partial scores in $l_1$ can join more than one partial scores in $l_2$.
For instance, suppose $l_1$
provides a sorted list of movie theaters and $l_2$ provides
a sorted list of restaurants sorted by their popularity score.
One may be interested in watching a movie and going to a restaurant
in the same neighborhood. The score of a combination is defined using the sum of
the individual scores and items of interest are (movie theater, restaurant) combinations.
There may be
many movie theaters and restaurants in the same neighborhood.

\textbf{\emph{Problem Conversion:}} \emph{Define a tripartite graph. The first column of nodes is a single dummy node (dummy1) as well as the third column(dummy3). Every node in the middle column represents a neighborhood (such as downtown).
Every movie theater ($l_1$), is modeled as an edge that connects dummy1 to its neighborhood
with a directed edge. Every restaurant ($l_2$), is modeled as a directed edge that connects
its neighborhood to dummy3. Edge weight is defined as item score. Heaviest paths of length
$2$ are top combinations. Rankjoin can be converted to heavy path on a tripartite graph.}

We design an adaptation of Rankjoin that works with self-joins (i.e. all $l_i$ are identical) for solving
\hpp on general graphs. Direct usage of Rankjoin results in creating
many paths with cycles that need to be pruned later on. SAA (Algorithm~\ref{algo:SAA}), avoids
creating cycles early on.
\begin{algorithm}
\scriptsize{
\begin{algorithmic}[1]
\caption{\Rank$(E, \len, k)$}
\label{algo:SAA}
\REQUIRE Sorted edge list $E$,
path length $\len$, number of paths $k$
\ENSURE top-$k$ heaviest paths of length $\len$
\STATE $topKBuffer$ = \empty \COMMENT empty sorted set
\STATE $\theta = \len \times W_{max}$ \COMMENT edge weights $\in [0, 1]$
\STATE $ScannedEdges = \empty$ \COMMENT empty hash table
\STATE $e = E.getNextEdge()$
\WHILE{$topKBuffer.BottomScore < \theta$}
\STATE $P^1 = e$
\FOR{$l=1$ to $\len - 1$}
\STATE $P^{i+1} = Join(P^i, ScannedEdges)$ \COMMENT avoid cycles
\ENDFOR
\STATE $Update(topKBuffer, P^{\len})$
\STATE $\theta = e.w + (\len - 1) \times W_{max}$
\STATE $ScannedEdges.put(e.start, e)$
\STATE $ScannedEdges.put(e.end, e)$
\STATE $e = E.getNextEdge()$
\ENDWHILE
\STATE \textbf{return} $topKBuffer$
\end{algorithmic}
}
\end{algorithm}
\Rank algorithm scans the sorted list of edges and reads one edge at a time.
Once a new edge ($e$) is read under sorted access, it is joined with the list
of heavier edges (ScannedEdges) $\ell-1$ times. The $Join$ operation
used in line $8$, takes two input parameters: 1) a list of paths of length $i$, $P^i$;
2) a list of edges, $ScannedEdges$. It extends all paths of length $i$ in $P^i$
with matching edges in $ScannedEdges$ (i.e. smaller search space) and ensures no cycles are created. This
results in paths of length $i+1$. In case $P^{i+1}$ is empty because there are no matching nodes
to join without cycles, naturally, no path will be created. For instance
no path of length $l > 1$ is created after joining the heaviest edge and the empty set
of scanned edges. This is how \emph{Join} is defined as an operation. Join
is one of the operations defined in Relational Algebra. Therefore, it is
not scary to use \emph{Join}, it has been used in practice as part
of the DBMS operations for a long time. We do an enhanced
implementation for self-joins that prunes cycles as well.
Paths of length $\ell$ will be created after enough edges are
scanned in sorted order. This operation is repeated $\ell - 1$ times, until
all possible paths of length $\ell$ including $e$ and heavier edges are
produced. At this point (line 10), \Rank knows that no path can be created during the
rest of execution which is heavier than $\theta = e.w + (\len - 1) \times W_{max}$.
Therefore, if we have already constructed $k$ paths heavier than $\theta$, we
know that these are the correct \topk paths of length $\ell$. In order to
speed up the $Join$ operation, $ScannedEdges$ is maintained as a hash table
that maps nodes (keys) to edges (values). Consider the example graph in Figure~\ref{fig:cliqueGraph}.
Suppose we are interested in the heaviest path of length $3$.
The \Rank algorithm proceeds by scanning the edge list in sorted order of the edge weights.
After reading $4$ edges at depth $d=4$, the edge weight ($w_{4}$) is 0.77 and $\wmax=0.93$.
We can calculate the threshold (upperbound) as: $\theta(2.63) = 0.77 + (3-1)*0.93$.
At depth $5$, \Rank is able to construct the path (6,1,2,3) and $\theta(2.62) = 0.76 + 2 * 0.93$.
Since $\theta$ is equal to the weight of (6,1,2,3), we know it is at least one of
the heaviest paths in the graph and can terminate without processing the rest of the edges
in the sorted list. This is what we call \emph{search space pruning by Sorted Access}.
\section{Limitations of \emph{SAA} and Optimizations}
\label{sec:opt}
\begin{myobs}
\label{obs:lightest}
Let $P$ be a path of length \len\ and suppose $e$ is the lightest
edge on $P$, i.e., its weight is the least among all edges in $P$.
Then until $e$ is seen under sorted access, the path $P$ will not
be constructed by \Rank.
\end{myobs}

Limitation~\ref{obs:lightest}, simply follows the fact that paths
are created using \emph{only} sorted access to the list of edges. If we run
\Rank on the graph of Fig.~\ref{fig:ex1}, we won't create
the heaviest path (a,b,c,d) until the edge (c,d) is scanned. However,
(c,d) is the lightest edge in the graph. This means following
only sorted access, may delay the construction of the heaviest path
until many irrelevant paths are created and discarded. This observation
motivates the following optimization for finding heaviest paths.

\begin{myopt}
We should try to avoid delaying the production of a
heavy path of a certain length until the lightest edge on the path is seen
under sorted access. One possible way of making this happen is via
random access. However, random accesses have to be done with care in
order to keep the overhead low.
\end{myopt}

In the example of Fig.~\ref{fig:ex1}, suppose we
have constructed (a,b,c) and know it is the heaviest
path of length $2$. It may be a good heuristic to
extend (a,b,c) with new edges such as (c,d) and
create paths of length $3$. Since (a,b,c) is the
heaviest path of length $2$, it is very likely to
create heavy paths of length $3$ by extending (a,b,c).
Making access to graph edges such as (c,d), regardless
of where they appear on the sorted list of edges, is
what we refer to as \emph{"\textbf{random access}"}.
An example of random access is presented by Figure 1(b). Making random
access to the edge list simply means reading an edge without caring about where
the edge is located on the sorted list of edges (through the adjacency matrix).
This can help produce
\emph{heavy paths} containing \emph{low weight} edges earlier. Our case study in
Section~\ref{sec:community}, shows one practical example where
a \emph{low weight} edge on a heavy path, can have a meaningful interpretation.
The discussion is provided in Section~\ref{sec:community}, for \emph{Gui-Rong Xue's}
edges with others. Following this optimization, suppose we use random accesses to find ``matching''
edges with which to extend heaviest paths of length $\len - 1$ to length \len.
This way we won't delay the creation of heavy paths until their lightest edge
is visited under sorted access. However, this may not be enough for
early termination of the algorithm. Another decisive factor in early
termination is the upperbound threshold ($\theta$) value on weight of paths we have not
created yet using the \emph{ScannedEdges}.

\begin{myobs}
\label{obs:thresh}
\Rank threshold is conservative compared to what can be easily obtained during the execution.
\end{myobs}

\Rank stops when $\theta$ gets
smaller than the weight of the heaviest path of length $\ell$ discovered so far. Suppose there is an instance in which
the heaviest path of length $\len$ is lighter than $w_{min} + (\len - 1) \times w_{max}$.
In this case, \Rank will produce every path of length $\len$ before reporting
the heaviest path, resulting in no pruning. In Fig.~\ref{fig:cliqueGraph}, this happens
when trying to find the heaviest path of length $4$ and in Fig.~\ref{fig:ex1}, when
trying to find the heaviest path of length $3$. The following lemma
highlights a natural possibility during the execution of \Rank for
obtaining a tighter threshold that results in earlier termination.

\begin{mylem}
\label{theo:nestedThresh}
Let $P$ be the heaviest path of length \len.
When \Rank terminates, every path of length $\len - 1$
that has weight no less than
$P.weight - \wmax$, will be created. This includes the
heaviest path of length $\ell - 1$.
\end{mylem}

\begin{proof}
\scriptsize{
Suppose \Rank finds $P$ at depth $d$. This means
$P.weight \ge w_d + (\len -1) \times \wmax$, and,
$P.weight - \wmax \ge w_d + (\len -2)\times \wmax$. Notice
that $P.weight - \wmax$ is a lower bound on the weight
of the heaviest path of length $\len-1$.
Therefore, if there is a path of length $\len -1$ that is heavier
than $P.weight - \wmax$, it will be produced by \Rank
by depth $d$.}
\end{proof}

One way of making the threshold tighter is by keeping track of shorter paths.
For example, if we know $P$ is a heaviest path of length $\len - 1$,
we can infer that the heaviest path of length \len cannot be heavier than
$P.weight + \wmax$, a bound often much tighter than
$w_d + (\len-1)\wmax$.

For this, we need to keep track of the heaviest paths of length $\len-1$ one by one
to lower the threshold more aggressively, gradually and smoothly.
Pursuing this idea recursively leads to a framework where we maintain
and release heaviest paths of each length $i$, $2 \leq i \leq \len$, at the right time and
make the next threshold tighter. More precisely, we
can perform the following optimization. For this purpose, we can use Buffers as sorted sets (priority queue)
in order to store (insert) and release heavy paths (remove top). \eat{We do not care about the structure
of the buffer and assume it's a sorted set. We use existing libraries for our implementation. Please
also note that the Buffers we present here are simple data structures used by \Main. It is
different than the \emph{memory buffers} discussed in the literature that deals with
low level optimizations of systems.}
\begin{figure}[t]
  \centering
  \includegraphics[scale=0.8, trim = 0mm 0mm 0mm 0mm]{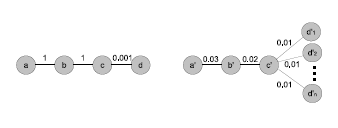}
  \caption{\scriptsize{Example instance of HPP. Graph has one heavy path and $n$ lighter paths.}}
  \label{fig:ex1}
\end{figure}
\begin{figure}[t]
\includegraphics[scale=0.8, trim = 0mm 0mm 0mm 10mm]{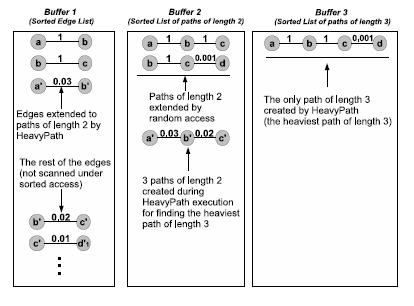}
\caption{\scriptsize{\Main example for finding the heaviest path of length 3 on graph of Figure~\ref{fig:ex1}}}
\label{fig:buffer}
\vspace{-10pt}
\end{figure}
Figure~\ref{fig:buffer} schematically describes the idea of using buffers
for different path lengths. Buffer $B_{1}$ is the sorted edge list for the graph in
Figure~\ref{fig:ex1}. Buffers $B_{2}$ and $B_{3}$ store paths of length 2 and 3 respectively.
When random accesses are performed to extend paths of length $l-1$ to
those of length $l$, the buffers can be used to store these intermediate paths.
For instance, when the heaviest path of length 1 is seen, say edge $(a,b)$ is seen,
it can be extended to paths of length 2 by accessing edges connected to its end points,
as represented by $B_{2}$ in Figure~\ref{fig:buffer}. Similarly, the heaviest
path of length 2 can be extended by an edge using random access to obtain
path(s) in buffer $B_{3}$. Obviously, we do not extend all shorter path segments and extend only
the one that is known to be the next heaviest path of shorter length in its own buffer.
\eat{Having done this, we can also update the threshold($\theta$) for the next buffer.
Thresholds ($\theta$), are updated in sequence for buffers until $\theta_{\ell}$, the
final threshold is reduced.}
We describe this framework in more detail when we present
the final algorithm next.

\begin{myobs}
\label{lim:segment}
\Rank tends to produce
the same sub-path multiple times. For example, when \Rank is required to
produce the heaviest path of length $3$ for the graph in Figure~\ref{fig:ex1},
for paths $(a',b',c',d'_i)$, $i\in[1,n]$, it produces the
length $2$ sub-path $(a',b',c')$ $n$ times as it does not maintain shorter
path segments. This results in more edge reads and significantly increases the
running time.
\end{myobs}

\eat{
Building on the ideas presented so far, we
present the \textbf{\emph{\Main}} algorithm in the next section.
} 
\section{\red{Repeated Sorted Access(RSA) Algorithm}}
\vspace{5pt}
 \label{sec:hp}

We start this section by providing an outline of our main algorithm
for solving HPP.
Our algorithm maintains a buffer $B_i$ for storing paths
of length $i$ explored and not released ($P.w < \theta$), where $2\le i\le \len$.
Let threshold $\theta_i$ denote an upper bound on the
weight of any path of length $i$ that has never been inserted into $B_i$.
\eat{
Each buffer $B_i$ is implemented as a sorted set of paths
of length $i$ sorted according to non-increasing weight
and without duplication
(e.g., paths $(a,b, c)$ and $(c,b,a)$ are considered duplicates).

Each buffer is also a set and does not allow duplicates.
Sorted set is one of the data structures provided
by Java programming language and cost of insertion
in a sorted set is $log(n)$ where $n$ is the number of paths
existing in the buffer.
The main function is provided by Algorithm~\ref{algo:main}.
Every buffer is initialized as an empty sorted set.
Each threshold is initialized to the maximum possible value at start.
}%
Algorithm~\ref{algo:main} describes the overall approach.
It takes as input a list of edges $E$ sorted in non-increasing order of edge weights,
\red{\erase{a graph $G$,}} and parameters \len and $k$.
It calls the \RJ method (Algorithm~\ref{algo:HP})
repeatedly until the \topk heaviest paths of length \len
are found.

\begin{algorithm}[t]
\scriptsize{
\begin{algorithmic}[1]
\caption{\Main$(E, \red{\erase{G,}} \len, k)$}
\label{algo:main}
\REQUIRE Sorted edge list $E$, 
path length $\len$, number of paths $k$
\ENSURE top-$k$ heaviest paths of length \len \red{\erase{in $G$}}
\FOR{$l=2$ to \len}
\STATE $B_{l}\gets\emptyset$  \COMMENT empty sorted set 
\STATE $\theta_l = \wmax \times l$
\ENDFOR
\STATE $topPaths\gets\emptyset$ \COMMENT empty sorted set
\WHILE{$\mid topPaths\mid < k$}
\STATE $topPaths\gets topPaths$ $\cup$ \RJ$(E,\red{\erase{G,}}\len)$
\ENDWHILE
\end{algorithmic}
}
\end{algorithm}
\begin{algorithm}[t]
\scriptsize{
\begin{algorithmic}[1]
\caption{\RJ$(E, \red{\erase{G,}} l)$}
\label{algo:HP}
\REQUIRE  Sorted list of edges $E$, \red{\erase{a graph $G$,}}  and
path length $l$
\ENSURE Next heaviest path of length $l$
\IF{$l = 1$}
\STATE $P^{1} \gets$ {\sc ReadEdge}$(E)$
\STATE $\theta_2 = 2 \times P^{1}.weight$
\STATE \textbf{return} $P^{1}$
\ENDIF
\WHILE{$B_{l}.topScore \le \theta_{l}$}
\STATE $P^{l-1} \gets$ \RJ$(E,\red{\erase{G,}} l-1)$	\COMMENT recursion
\STATE $s,t \gets$ {\sc EndNodes}$(P^{l-1})$
\FORALL{$y\in V\mid(y,s) \in E$}
\STATE $B_{l}\gets B_{l}\cup((y,s)+P^{l-1})$  \COMMENT avoiding cycles
\ENDFOR
\FORALL{$z\in V\mid(t,z) \in E$}
\STATE $B_{l}\gets B_{l}\cup(P^{l-1}+(t,z))$  \COMMENT avoiding cycles
\ENDFOR
\ENDWHILE
\STATE $P^{l} \gets$ {\sc RemoveTopPath}$(B_{l})$
\IF{$l < \len$}
\STATE $\theta_{l+1} = \max(B_{l}.topScore, \theta_l) + \wmax$
\ENDIF
\STATE \textbf{return} $P^{l}$
\end{algorithmic}
}
\end{algorithm}

\eat{
\IF{$B_{l}.topScore > \theta_{l}$}
\STATE $X = B_{l}.removeTopPath()$
\IF{$l < \len$}
\STATE $\theta_{l+1} = MAX(B_{l}.topScore, \theta_l) + W_{max}$
\ENDIF
\STATE \textbf{return} $x$
\ENDIF
\STATE $nextPath_{l-1} = \RJ(E,l-1,G)$
\STATE $s = nextPath_{l-1}.StartNode$
\STATE $t = nextPath_{l-1}.EndNode$
\FORALL{nodes $z$ s.t. $(t,z) \in G$ avoiding cycles}
\STATE $B_{l}.insert(nextPath_{l-1}+t\leadsto z)$
\ENDFOR
\FORALL{nodes $z'$ s.t. $(z',s) \in G$ avoiding cycles}
\STATE  $B_{l}.insert(z'\leadsto s+nextPath_{l-1})$
\ENDFOR
\STATE $ \RJ(E,G,l)$
}%
%
Algorithm~\ref{algo:HP}\footnote{\scriptsize{In order to avoid StackOverflow we do a non-recursive implementation. We use
the recursive pseudo code because it is more intuitive.}} describes the \RJ method.
It takes as input a list of edges $E$ sorted in non-increasing
order of edge weights,\red{\erase{a graph $G$,}}  and
the desired path length $l$, \red{$l \ge 2$}.
It is a recursive algorithm that produces heaviest paths of shorter lengths
on demand, and extends them with edges to produce paths of length \len.
The base case for this recursion is when $l = 1$ and
the algorithm reads the next edge from the sorted list of edges.
The {\sc ReadEdge} method 
returns the heaviest unseen edge in $E$ \red{(sorted
access, line 2)}.
If $l < \len$, the path of length $l$ obtained as a result of the recursion
is extended by one hop 
to produce paths of length $l + 1$.
Specifically, a path of length $l<\len$ is extended using edges
\red{(random access, lines 8 and 10)}
that can be appended to either one of its ends (returned by method
{\sc EndNodes}).  The ``+'' operator for appending an edge
to a path is defined in a way that guarantees no cycles are created.
The threshold $\theta_l$ is updated aggressively when the next
heaviest path of length $l-1$ is released from $B_{l-1}$. 
This is done by calling the method {\sc RemoveTopPath}
for buffer $B_{l-1}$ and returning the resulting path. If there
is no path already in the buffer that beats $\theta_{l-1}$, this results
in more recursion using smaller $l$, until this condition
becomes true at some point during the execution. At any point during
the execution
if $i < \len$ and the next heaviest path of length $i$ (P) has been obtained
($B_i.topScore > \theta_i$), $\theta_{i+1}$ is more intuitively updated as $\theta_{i+1} = P.W + W_{max}$~\footnote{\scriptsize{We use a bound tighter threshold, line $14$ of Algorithm $3$.}}.

\eat{
\para{More on Search Space Pruning}
To start, Algorithm~\ref{algo:HP} explores
the neighborhood of the heaviest edge for finding the heaviest path of length $2$.
Heavy edges are explored and the threshold $\theta_2$ is updated until
the weight of the heaviest explored path of length $2$ is greater than $\theta_{2}$.
The \RJ algorithm updates $\theta_2$ aggressively when the next heaviest edge is seen.
It uses the fact that any path of length $2$ created later from the currently unseen edges
cannot have a weight greater than $2 \times P^{1}.weight$,
where $P^{1}$ is the lightest edge seen so far.
For $l>1$, $\theta_{l+1}$ is updated when the next heaviest path of length $l$ is obtained.
Therefore, we can use $\theta_l$ as an upper bound on the weight of any path
of length $l$ that can be created \emph{in the future} from the buffers $B_{i}$, where $i<l$.
The maximum of $\theta_l$ and the weight of the current heaviest path in $B_l$ (i.e., $B_{l}.topScore$)
provides an upper bound on the weight of any path
of length $l$ that can be created in the future 
(after the previous heaviest path of length $l$ leaves $B_l$ (Line $14$)).
Adding $\wmax$ to the obtained upper bound provides a new (tighter) threshold for
paths of length  ${l+1}$. This makes the search space (the number of paths of length $l+1$ created and the number
of paths of length $l$ extended), smaller.
It results in a more aggressive pruning and a more rationale path creation process
compared to \Rank algorithm. Another look at Figure~\ref{fig:buffer},
shows how the graph is searched using those paths that are released from buffers until the
heaviest path of length $3$ is found.
}

\begin{mytheo}
\label{algo:HPcorrectness}
Algorithm \Main correctly finds top-$k$ heaviest paths of length \len.
\end{mytheo}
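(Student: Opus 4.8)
The plan is to reduce the theorem to a statement about the single procedure \RJ. I would prove, by induction on $l$ with $1 \le l \le \len$, the invariant that the successive calls to \RJ$(E,l)$ return length-$l$ simple paths in non-increasing weight order, with no path returned twice, the $i$-th call returning a heaviest length-$l$ path among those not returned by the first $i-1$ calls. Granting this for $l = \len$, the theorem is immediate: the \texttt{while} loop of Algorithm~\ref{algo:main} invokes \RJ$(E,\len)$ exactly $k$ times (each call adds one new path to $topPaths$, since returns are distinct), so $topPaths$ ends up holding $k$ distinct length-$\len$ paths, none lighter than any other length-$\len$ path, i.e.\ the \topk heaviest.

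The engine of the induction is a \emph{threshold invariant}: at every point of the execution, $\theta_l$ is an upper bound on the weight of every length-$l$ path that has not yet been inserted into $B_l$. The first ingredient is a bookkeeping fact about when a path enters $B_l$: a length-$l$ path $Q=(v_0,\dots,v_l)$ is inserted into $B_l$ as soon as \emph{either} of its two length-$(l-1)$ subpaths $(v_0,\dots,v_{l-1})$ or $(v_1,\dots,v_l)$ is released (returned) by \RJ$(E,l-1)$, because at that instant the two inner loops of Algorithm~\ref{algo:HP} append every incident edge to the released subpath via random access. Hence, if $Q$ has not yet been inserted into $B_l$, then neither subpath has ever been released, so each subpath has weight at most $\max(B_{l-1}.topScore,\theta_{l-1})$ -- the first term bounds paths still resident in $B_{l-1}$, and the second, by the inductive threshold invariant at level $l-1$, bounds $(l-1)$-paths never yet inserted into $B_{l-1}$. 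Adding $\wmax$ gives $Q.weight \le \max(B_{l-1}.topScore,\theta_{l-1})+\wmax$, which is exactly the value last assigned to $\theta_l$ at line~14 (upon the previous release from $B_{l-1}$); since neither $B_{l-1}$ nor $\theta_{l-1}$ is touched while control sits in \RJ$(E,l)$ outside its recursive call, that value is still a valid bound. The base case $l=1$ is handled directly: {\sc ReadEdge} returns edges in non-increasing weight order, and $\theta_2 = 2 \times P^{1}.weight$ dominates every length-$2$ path not yet in $B_2$, because such a path still has both its edges unread, each of weight $\le P^{1}.weight$.

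Granting the threshold invariant and the inductive hypothesis at level $l-1$, a single call to \RJ$(E,l)$ is correct. Its \texttt{while} loop terminates: the graph is finite, so after finitely many recursions either no new $(l-1)$-path is produced, or $\theta_l$ -- which is non-increasing across its updates, since $B_{l-1}.topScore$ at release times and $\theta_{l-1}$ both decrease -- drops below $B_l.topScore$. The loop exits only with $B_l.topScore > \theta_l$, and the released path $P^l = {\sc RemoveTopPath}(B_l)$ then beats (i) every length-$l$ path currently in $B_l$, being the buffer maximum; (ii) every length-$l$ path not yet inserted into $B_l$, which by the threshold invariant has weight $\le \theta_l < P^l.weight$; and (iii) every length-$l$ path ever inserted into $B_l$ in the future, because -- using the inductive hypothesis that releases from $B_{l-1}$ come in non-increasing weight order -- every future $(l-1)$-release weighs at most $\max(B_{l-1}.topScore,\theta_{l-1}) = \theta_l - \wmax$, so its extensions weigh at most $\theta_l < P^l.weight$. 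Thus $P^l$ is a heaviest not-yet-returned length-$l$ path; and since $B_l$ is maintained as a duplicate-free set in which a path and its reverse are identified, $P^l$ is returned exactly once. This closes the induction, and the theorem follows as in the first paragraph.

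The step I expect to be the main obstacle is the temporal bookkeeping inside the threshold invariant: establishing that the number stored in $\theta_l$ at line~14 is still a correct upper bound on $B_l$'s future insertions at the moment \RJ$(E,l)$ releases a path. This hinges on two facts -- that between a release from $B_{l-1}$ and the subsequent release from $B_l$ neither $B_{l-1}.topScore$ nor $\theta_{l-1}$ changes, and that releases from $B_{l-1}$ occur in non-increasing order (the inductive hypothesis) -- and it forces the invariant to be phrased in terms of ``inserted into / released from $B_l$'' rather than in terms of the recursion's call stack, both because of the non-recursive implementation mentioned in the footnote and because the same undirected path can be generated from either of its two length-$(l-1)$ subpaths. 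Getting that phrasing exactly right, together with a tie-breaking convention that makes ``a heaviest not-yet-returned path'' unambiguous, is where the proof is most delicate.
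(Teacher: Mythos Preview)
Your proposal is correct and follows essentially the same inductive route as the paper: induct on $l$, use the base case $\theta_2 = 2\cdot w_d$, and in the step argue that $\max(\theta_{l-1},B_{l-1}.topScore)+\wmax$ upper-bounds any length-$l$ path not yet in $B_l$, so the buffer top at termination is the next heaviest. Your write-up is in fact more careful than the paper's --- you explicitly treat termination of the \texttt{while} loop, the duplicate-elimination via the sorted-set buffer, and the temporal validity of $\theta_l$ between consecutive releases, all of which the paper's proof leaves implicit.
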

\begin{proof}
\scriptsize{
The proof is by induction. The base case is for going from edges
to paths of length $2$. Given that all of the edges above depth $d$
are extended, the heaviest path that can be created from them is already
in $B_2$. The weight of the heaviest path that can be created from lighter edges is
at most $2 \times w_d$.
If the heaviest path in $B_2$ is heavier than $2 \times w_d$, then it must be the
heaviest path of length $2$. Assuming the heaviest
paths of length $l$ are produced correctly in sorted order, we show the heaviest path of length
$l+1$ is found correctly.
Suppose $P$, the heaviest path of length $l+1$, is created for the first time by
extending $Q$, which is the $n^{\rm th}$ heaviest path of length $l$.
The next heaviest path of length $l$ is either already in
$B_{l}$ or has not been created yet.
Therefore, $\max(\theta_{l}, B_l.topScore)$ is an upper bound
on the next heaviest path of length $l$ that has not been identified yet. Any
such path can be extended by an edge of weight at most $\wmax$.
Suppose when the $m^{\rm th}$ heaviest path of length
$l$ is seen, $\max(\theta_{l}, B_l.topScore) + \wmax$ is updated to a value
smaller than $P.weight$. It is guaranteed that $P$ is already
in $B_{l+1}$ and has the highest weight in that buffer.
In other words, when the threshold is smaller than $P.weight$,
the difference between the weight of $P$ and next heaviest path of length $l$
is more than $\wmax$. Now,
paths of length $l+1$ that can be created from heavier paths of
length $l$ are already in the buffer, and no unseen path of length $l$ can
be extended to create a path heavier than $P$.
Therefore, $P$ is guaranteed to be the heaviest path.
The preceding arguments hold for \topk heaviest paths where $k>1$.
}
\end{proof}

\eat{
The proof is done by induction. We omit the proof for space
considerations. Please refer to~\cite{TR} for the proof of
this theorem.

Figure~\ref{fig:buffer}, shows an example of
running \Main on the graph of Figure~\ref{npaths} for
finding the heaviest of length $3$.}

\eat{
\begin{proof}
The proof is by induction. The base case is for going from edges
to paths of length $2$. Given that all of the edges above depth $d$
are extended, the heaviest path that can be created from them is already
in $B_2$. The weight of the heaviest path that can be created from lighter edges is
at most $2 \times w_d$.
If the heaviest path in $B_2$ is heavier than $2 \times w_d$, then it must be the
heaviest path of length $2$.

Assuming the heaviest
paths of length $l$ are produced correctly in sorted order, we show the heaviest path of length
$l+1$ is found correctly.
Suppose $P$, the heaviest path of length $l+1$, is created for the first time from by
extending $Q$, which is the $n^{\rm th}$ heaviest path of length $l$.
The next heaviest path of length $l$ is either already in
$B_{l}$ or has not been created yet.
Therefore, $\max(\theta_{l}, B_l.topScore)$ is an upper bound
on the next heaviest path of length $l$ that has not been identified yet. Any such
path identified later on, can join at most an edge of weight $\wmax$.
Suppose when the $m^{\rm th}$ heaviest path of length
$l$ is seen, $\max(\theta_{l}, B_l.topScore) + \wmax$ is updated to a value
smaller than $P.weight$. It is guaranteed that $P$ is already
in $B_{l+1}$ and has the highest weight in that buffer.
In other words, when the threshold is smaller than $P.weight$,
the difference between the weight of $P$ and next heaviest path of length $l$
is more than $\wmax$. Now,
paths of length $l+1$ that can be created from heavier paths of
length $l$ are already in the buffer, and no unseen path of length $l$ can
be extended to create a path heavier than $P$.
Therefore, $P$ is guaranteed to be the heaviest path.
The preceding arguments hold for \topk heaviest paths where $k>1$.
\end{proof}
}

\eat{

Algorithm~\ref{algo:HP} extends the heaviest paths in sorted order,
to avoid their repeated creation. However, since
paths are extended by random accesses, it is possible to create
a path twice, which is unnecessary. For instance,
a path of length $l$ may be created
while extending its heaviest sub-path and again
while extending its lightest subpath
of length $l-1$.
}
\erase{Next, we provide a
strategy for minimizing duplicates as much as possible.}

\eat{
\subsection{\red{Duplicate\erase{Elimination} Minimization by Controlling Random Accesses}}
\label{sec:random}

Algorithm \RJ extends a path of length $l$ to one of length $l+1$ by
appending all edges that are incident on either end node of the path.
Since these edges are not accessed in any particular order,
in the literature of top-$k$ algorithms, \erase{it is} they are referred to as
random accesses.
\red{
In this section, we develop a strategy for controlling random
accesses performed by Algorithm \RJ for minimizing duplicates. }
Duplicate paths of length $l+1$ can be created either due to
extending the same path of length $l$,
or by extending two different subpaths of the same path of
length $l+1$.
Our solution for avoiding duplicates of the first kind
is implementing every buffer as a sorted set. This avoids propagation
of duplicates during execution.
Further, the threshold update logic
guarantees that if there is a copy $P'$ of some path $P$ that is already in the buffer
$B_{l}$, the path will not be returned before its copy $P'$
makes it to $B_{l}$. The algorithm ensures that when $P$ is returned,
$P'$ either does not exist or has been constructed and eliminated.

In addition to eliminating duplicates that have been created,
we take measures to reduce the\blue{ir very}  creation.
Suppose $P$ is a path of length $l+1$ whose right sub-path of length $l$
is the heaviest path of length $l$ and its left sub-path of length $l$ is
the second heaviest path of length $l$.
Since random accesses are performed at both ends of a path, $P$ will be
created twice, using each of the top-$2$ paths of length $l$. 

One possible 
solution is to perform random accesses at one of the ends of a path.
Although this prevents duplicate creation of $P$,\erase{but} it
does not allow fully exploring the neighborhoods
of heavier edges.
For example, consider a path with edges $\{(a, b),(b,c),(c, d)\}$
that is the heaviest path of length $3$, \blue{with} 
 $(b, c)$ 
 the heaviest in the graph. If
the addition of edges is restricted to the beginning of the path,
the construction of the heaviest path of length $3$
will be delayed until $(a,b)$ is observed.
There can be graph instances for which this can happen at an arbitrary depth.
Therefore, it is advantageous to extend paths on both ends, and we dismiss
the idea of one sided extension.

\begin{mylem}
\label{lem:ra}
Suppose $Q$ is the $n^{\rm th}$ heaviest path of length $l$ with $(a, b)$ as
its heaviest edge. 
No new path of length $l+1$ can be created from $Q$ by adding an edge
which is heavier than $(a, b)$ using the \RJ method.
\end{mylem}
\begin{proof}
\scriptsize{
Let $P$ be a new path of length $l+1$ created from $Q$. If $P$ is
derived for the first time, it can not have a subpath of length $l$ that is
heavier than $Q$. Otherwise, the heavier subpath is one of the $n-1$ paths
created \erase{ and explored }before $Q$.
On the other hand, adding any edge to the end of $Q$ which is
heavier than $(a, b)$ results in a path of length $l+1$
that has a subpath of length $l$ \erase{ which is }heavier than $Q$.
The  lemma  follows.
}
\end{proof}
\vspace{-6pt}
Therefore, no new path of length $l+1$ can be created by
adding an edge to $Q$, that is heavier than the 
heaviest edge of $Q$.
This leads to the following theorem.

\begin{mytheo}
\label{theo:ra}
Using \RJ, every new path of length $l+1$ is created only by extending its
heavier sub-path of length $l$. No path is created more than twice.
A path of length $l+1$ is created twice iff both its sub-paths of length $l$ have the
same weight.
\end{mytheo}

\eat{
As a result of the above lemma, every path of length $l+1$ is created only by extending its
heavier sub-path of length $l$. This strategy leads to creating duplicate
paths only when both the first and the last edges of a path of length $l$
have equal weight.
}

\red{The strategy for controlling random accesses embodied in Theorem~\ref{theo:ra}
can be generalized
to a stronger strategy as follows.}

\begin{myfact}
\label{fact:ra1}
Using \RJ, given a path of length $l$, no new path of length $l+1$ can be created
by adding an edge to its rightmost node that is heavier that its leftmost edge, 
or adding an edge to its leftmost node that is
heavier than its rightmost edge.
\end{myfact}

\red{In the rest of the paper, we refer to the strategy for controlling random accesses
described in Fact~\ref{fact:ra1} as \emph{random access strategy}. Notice that
Algorithm \Main always employs random access in addition to sorted
access. Additionally, we have the option of adopting (or not) the random access strategy
above for controlling when and how random accesses are used. We have:}

\begin{myfact}
\label{fact:ra2}
If all of the edge weights in the graph are distinct, every path is created only once
when the random access strategy mentioned above is followed.
\end{myfact}

In the rest of this paper, we follow Fact~\ref{fact:ra1}
for performing random accesses unless otherwise specified.
We refer to this way of performing random accesses as
\emph{random access strategy}. \red{In our experiments, we measure
the performance of \erase{Algorithm}\Main both without and with\erase{the
random access strategy adopted.} this strategy.}%
\eat{
In the rest of this paper, we follow this random access strategy that extends
a path with only those edges that are lighter than the heaviest edge in that path,
unless otherwise specified.

\begin{proof}
\scriptsize{
Proof is done by induction. The base case is going from edges to
paths of length $2$ and this is obviously true for the base case.
Suppose all created paths of length $l$ are distinct. We will
show no duplicate are possible for paths of length $l+1$. Suppose
$X$ is a path of length $l+1$ that is created twice. Therefore, $X$
can be created from two distinct paths of length $l$ which we call
$Y$ and $Y'$. This means $Y$ and $Y'$ have a sub-path of length $l-1$
in common. $Y$ is the beginning sub-path of length $l$ of $X$ and $Y'$
is the ending. If $X$ is constructed by adding its first edge to $Y'$
it means $X.firstEdge.weight < X.lastEdge.weight$.
If it is created from $Y$ the opposite of the above and this
is a contradiction. Which means if all paths of length $l$ are distinct,
all paths of length $l+1$ will also be distinct.
}
\end{proof}

If there are many equal edge weights our algorithm generates some
paths more than once. Such cases are not the most likely cases
in most practical situations when edge weights represent probabilities
and are decimal numbers. Despite the fact that these cases are not
very likely, one could avoid the propagation of such duplicates using
the initial idea of keeping a history corresponding to each buffer
in order to minimize duplicate paths.
}
}

\section{Community Detection Case-Study}
\label{sec:community}
\vspace{5pt}
\begin{figure}[t]
\label{Communities}
\subfigure[Community 1 (Chinese University of Hong Kong, Aliyun.com, MSRA)]{
\includegraphics[scale=0.7]{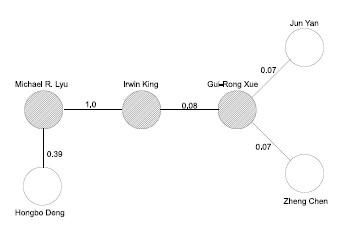}
\vspace{70pt}
\label{fig:depth1}
\vspace{70pt}
}
\subfigure[Community 2 (Yahoo!, Google)]{
\includegraphics[scale=0.7]{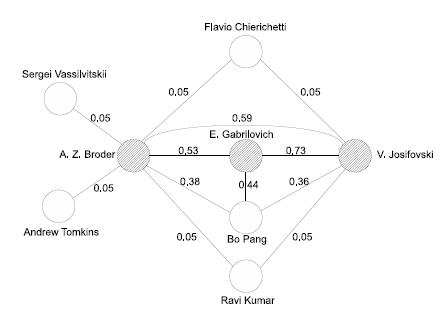}
\label{fig:depth2}
}\label{fig:access}
\vspace{-10pt}
\caption{\scriptsize{Two sample communities along with their cores, found using parameters $\ell = 5$, $k = 100$ and support = $10$.}}
\vspace{-10pt}
\end{figure}

Suppose we have a co-authorship graph that connects researchers
according to their publications. The edge weight is defined
as the sum of "\emph{pairwise collaboration credits}"
that researchers get from each paper they publish together.
Pairwise collaboration credit for a pair of authors in a paper,
is $1/authorNum$. Where $authorNum$ is the number of authors
of a paper. Given a collection of publications, we define an
edge between two researchers if they have published at least one paper
together. Then use the sum of "\emph{pairwise collaboration credits}"
they get from all of the papers they have published together as the
edge weight. Single author papers do not contribute to edge weights.
We further normalize edge weights by the maximum
so that they are all in $[0,1]$. Using this graph, and given parameters
\len and $k$ and a \emph{popularity}~\footnote{\scriptsize{Same as support in frequent pattern mining}} parameter, we do the following,

\begin{enumerate}
  \item Find top-$k$ heaviest paths of length $l$
  \item Finding \textbf{\emph{cores}} of communities: Find all \textbf{\emph{maximal}} frequent sub-paths
  in top-$k$ paths. Frequent
  sub-paths are those that appear in more number of paths than the \emph{popularity} parameter.
  \item Aggregate top-$k$ paths and form the \emph{aggregate graph of top paths}
  \item Given a \textbf{\emph{core}} and the \emph{aggregate graph of top paths}, do the following: add any edges that connect
  any other researchers in the aggregate graph of top paths to the core. This
  shapes a community of \emph{core} researchers that appear on many heavy paths. It also
  adds those other researchers who are strongly connected to the core community.
  \item Communities can grow using larger values for $k$ that adds more nodes to the picture. In a way, $k$ can
  be used to zoom in and zoom out of communities.
\end{enumerate}

Using DBLP data~\cite{DD}, we constructed this graph using papers published
between $2008$ and $2011$ (inclusive), presented in the main program one of these conferences: "SDM",
"PKDD", "ICDM",
"SIGKDD", "CIKM", "SIGIR", "ICML", "NIPS", "WWW" and "WSDM". Figure 4, shows
two of the communities we find using parameters $l=5$, $k=100$ and $popularity = 10$.
We provide this graph (DBLPG), and the top-$100$ paths of length $5$, to enable
the reader to reproduce the communities described~\footnote{\scriptsize{http://ucalgary.ca/{\raise.17ex\hbox{$\scriptstyle\sim$}}mkhabbaz/DBLPG.zip}}.
We notice that the core of community 1 (Figure~\ref{fig:depth1}) contains the pair of researchers
with the highest edge weight in the graph. This shows a long lasting
collaboration. We did a quick google search about each of the core
communities in DBLPG. Irwin King and Michael R. Lyu work for the Chinese
University of Hong Kong. They are connected to Gui-Rong Xue who
is the senior director at Aliyun.com and he is connected to a principal
researcher from Microsoft research Asia (Zheng Chen).

Most of the researchers in community 2 (Figure~\ref{fig:depth1}) have worked for Yahoo! Research
Labs, including the three researchers in the core. Andrei Z. Broder is
currently a distinguished scientist at Google. He has previously worked at Yahoo!, IBM and
Altavista. We also found on his Wikipedia page that
he is the inventor of MinHash locality sensitive hashing. Evgeniy Gabrilovich, is a
senior staff research scientist at Google who also used to have a central role
at Yahoo! research. Vanja Josifovski apparently still works for Yahoo! and is a
Principal Research Scientist and the Lead of the Performance Advertising Group at Yahoo! Research.
Most of the other researchers in community 2 work for yahoo! as their emails appear
on their publications. Some may have moved to Google recently!

It is worth noting that we discovered three other cores as well in our output.
Due to space limitations, we only report the cores:
\emph{core 3} =  $\{$Andrei Z. Broder, Evgeniy Gabrilovich, Xuerui Wang $\}$(weight = 0.73),
\emph{core 4} = $\{$Shuicheng Yan, Ning Liu, Jun Yan, Zheng Chen $\}$ and
\emph{core 5} = $\{$Ning Liu, Jun Yan, Zheng Chen, Weizhu Chen $\}$.
Xuerui Wang is also a scientist at Yahoo! labs. Cores $4$ and $5$ have size $4$.
We notice that a sub-path of length $3$ is
shared between these two, with all researchers from Microsoft Research Asia.
Community $1$ is connected to two of these three researchers through
Gui-Rong Xue. This shows a potential for collaboration between these two
\emph{cores} in the future! who are the Chinese University of
Hong Kong and Microsoft Research Asia. Changing the parameters
$l$, $k$ and \emph{popularity}, changes the output in the form
presented in Figure~\ref{fig:access}. It can also help with creating
a visualization tool that adapts its output with the current parameter
values and be used for visualizing hierarchies and relationships
of communities at different scales. Such a tool needs to run \Main repeatedly
with different parameter values in order to change its view. This highlights
the importance of scalability and efficiency for HPP problem.
We leave the continuation and
implementation of this idea to our future work.

The case study presented in this section, is also an example
of an application where finding the exact solution is required.
This is because we focus on $100$ heaviest paths and frequent
pattern mining in order to discover the cores of communities.
Heuristic or any sort of approximate solutions may result in paths that
do not necessarily go through the cores that form the
top communities. \emph{Top-$k$ heaviest paths, rely on their cores to be heavy!}

\section{Experimental Analysis}
\label{sec:expn}
\subsection{Reproducibility}
\label{sec:data}
\eat{
\begin{figure*}[t]
\centering
}

\eat{\subfigure[Summary of datasets]{
\begin{minipage}[b]{\figwidth}
\centering
\small
\begin{tabular}{|l@{ }|l@{ }|l@{ }|l@{ }|}
\hline Measures & Cora & last.fm & Bay  \\ \hline
Nodes  &  70 & 40K & 321K  \\
Edges  &  1580 & 183K & 400K  \\ \hline
Average Degree & 22.6 & 4.5 & 1.2  \\ \hline
Number of components & 1 & 6534 & 1 \\ \hline
\end{tabular}
\vspace*{2mm}
\end{minipage}
\label{fig:data}
}

\eat{
\begin{figure*}[t]
\centering
\subfigure[Cora]{
\includegraphics[width=\figthree]{figs/distC}
\label{fig:distC}
}
\subfigure[last.fm]{
\includegraphics[width=\figthree]{figs/distL}
\label{fig:distL}
}
\subfigure[Bay]{
\includegraphics[width=\figthree]{figs/distB}
\label{fig:distB}
}\vspace{-10pt}
\caption{Edge weight distributions for the Cora, last.fm and Bay datasets}
\vspace{-10pt}
\label{fig:data}
\end{figure*}
}

\begin{table}[t]
\small
\centering
\begin{tabular}{|l@{ }|l@{ }|l@{ }|l@{ }||l@{ }|}
\hline Measures & Cora & last.fm & Bay & DB  \\ \hline
Nodes  &  70 & 40K & 321K  \\
Edges  &  1580 & 183K & 400K  \\ \hline
Average Degree & 22.6 & 4.5 & 1.2  \\ \hline
Number of components & 1 & 6534 & 1 \\ \hline
\end{tabular}
\caption{Summary of datasets}
\label{tab:data}
\vspace{-10pt}
\end{table}

\para{\red{Algorithms Compared}}
We implemented the algorithms DFS, \DP, \Rank  and \Main. As we noted earlier,
we do not report any results for DFS because it performs significantly weaker
than the others and makes the comparison of other algorithms in figures inconvenient.
We evaluate our algorithms over three real datasets: Cora, last.fm and Bay,
summarized in Table~\ref{tab:data}. The distributions of edge weights for the
three datasets can be found in Figure~\ref{fig:data}.
}

We use four graphs in our experiments. We create graphs using Cora\footnote{\scriptsize{\url{http://www.cs.umass.edu/~mccallum/data}}},
last.fm\footnote{www.last.fm} and DBLP datasets. \textbf{Cora(70 nodes-1580 edges):} nodes represent research topics and edge weights are defined using the average fraction of citations between the two topics. \textbf{Last.fm(40k nodes-183k edges)}: we crawled the existing collection of playlists. Nodes represent songs and edges represent co-occurrence in playlists. Edge
weight is defined using the Dice coefficient\footnote{\scriptsize{$\texttt{dice}(i,j) = \frac{2|i\cap j|}{|i|+|j|}$}}. \textbf{BAY(321k nodes-400k edges)}: it represents the road network in San Francisco Bay area\footnote{\scriptsize{\url{http://www.dis.uniroma1.it/}}}. For this graph we perform the $C-W$ (C=1) transformation described in Section~\ref{sec:probdef}, and solve $\ell$-TSP because distance minimization
makes more sense for road networks. \textbf{DBLPG(4.5k nodes-9k edges)}: Section~\ref{sec:community} describes how
the graph is created. In all cases, the graph we work with has edge weights in $[0-1]$ and the input graph
is a text file that stores one edge per line in this format: "source destination weight". You can
access a .zip file containing all the preprocessed graphs\footnote{\scriptsize{http://ucalgary.ca/{\raise.17ex\hbox{$\scriptstyle\sim$}}mkhabbaz/Graphs.zip}}.
We do not use graphs with millions of edges since the problem is NP-hard and we focus on finding exact answers.
On the other hand, all of the graphs we use in experiments deal with practical real life applications.
One interesting fact we discovered in our experiments is that the smallest of the graphs (Cora)
we use, is one of the most challenging scenarios. This is due to the fact that its average node degree is higher
than others and this results in an exponentially larger number of longer paths when $\ell$ increases. We do not report
any results regarding the DFS algorithm because it is inferior to \DP in all cases. Source code of all algorithms, along with instructions
for running can be downloaded\footnote{\scriptsize{http://ucalgary.ca/{\raise.17ex\hbox{$\scriptstyle\sim$}}mkhabbaz/HPPCase.zip}}. All experiments were performed on a Linux machine with 64GB of main memory and 2.93GHz-8Mb Cache CPU. To be consistent, we allocated 12GB of memory for each run.
In all of the cases where \Main\footnote{\scriptsize{We use the name HeavyPath for our implementation of the algorithm.}} runs out of memory, other algorithms either also run out of the 12GB allocated
memory, or do not terminate after running for a couple of days. \eat{We do not have any estimate of how much longer
it takes until DFS manages to find the exact solution for larger $\ell$ on some graphs. In theory, it can
terminate after life is over on earth!} Our implementation is in Java. For all algorithms other
than \Main, we provide implementation such that the time spent on Garbage Collection can be
measured accurately using the \emph{Jstat} tool\footnote{\scriptsize{Java does garbage collection automatically using
GC algorithms. http://docs.oracle.com/javase/1.5.0/docs/tooldocs/share/jstat.html}}. All of the figures that present running time use log scale on Y-axis. We further
analyzed GC time and found it is negligible compared to the total running time in all cases as supposed to be
since Java is one of the most reliable programming languages.

\subsection{Empirical Evaluation}
\begin{figure*}[t]
\centering
\subfigure[Vary $\len, k=1$, Cora ]{
\includegraphics[width=\figthree]{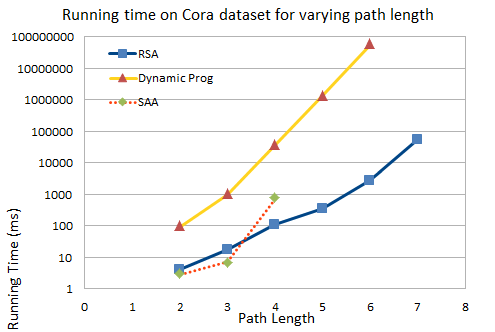}
\label{fig:Lcora}
}
\subfigure[Vary $\len, k=1$, last.fm ]{
\includegraphics[width=\figthree]{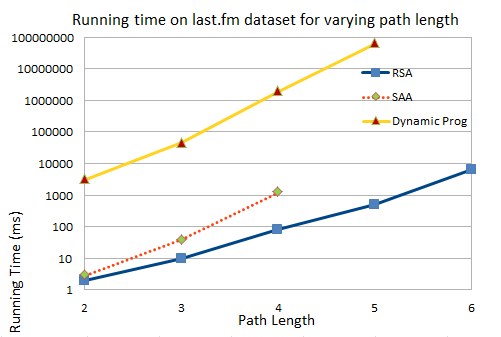}
\label{fig:Llastfm}
}
\subfigure[Vary $\len, k=1$, Bay ]{
\includegraphics[width=\figthree]{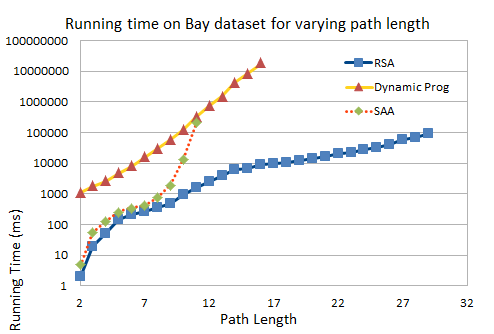}
\label{fig:Lbay}
}
\subfigure[$\len=4$, vary $k$, Cora ]{
\includegraphics[width=\figthree]{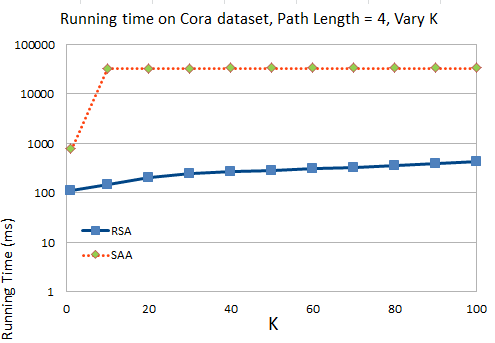}
\label{fig:kcora}
}
\subfigure[$\len=4$, vary $k$, last.fm ]{
\includegraphics[width=\figthree]{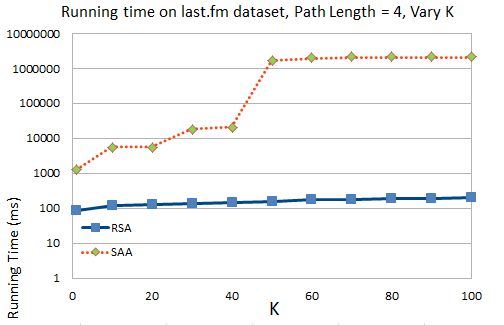}
\label{fig:klastfm}
}
\subfigure[$\len=10$, vary $k$, Bay ]{
\includegraphics[width=\figthree]{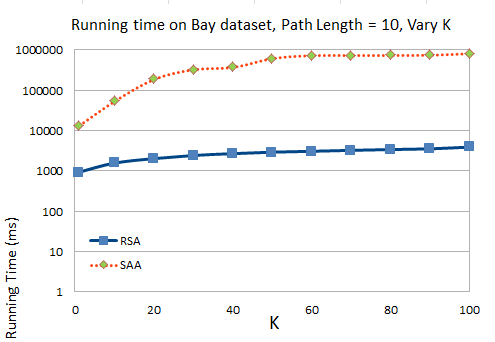}
\label{fig:kbay}
} \vspace{-10pt}
\caption{\scriptsize{Running time comparisons for exact algorithms with different parameter settings.}}
\vspace{-10pt}
\label{fig:exact}
\end{figure*}

\begin{figure*}[t]
\centering
\subfigure[Vary $\len, k=1$, Cora ]{
\includegraphics[width=\figthree]{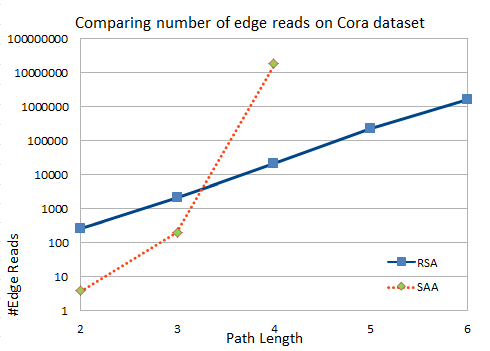}
\label{fig:Ecora}
}
\subfigure[Vary $\len, k=1$, last.fm ]{
\includegraphics[width=\figthree]{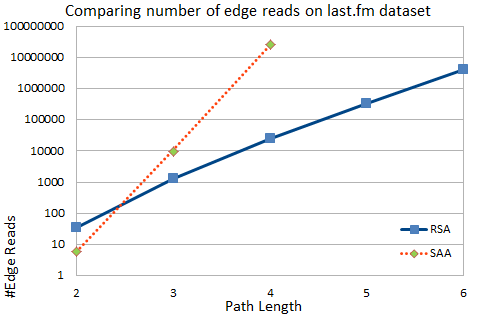}
\label{fig:Elastfm}
}
\subfigure[Vary $\len, k=1$, Bay ]{
\includegraphics[width=\figthree]{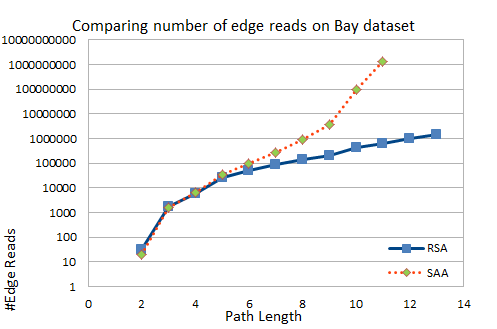}
\label{fig:Ebay}
}
\vspace{-10pt}
\caption{\scriptsize{Comparing the number of edge reads between SAA and \Main for different \len}}
\vspace{-10pt}
\label{fig:EdgeReads}
\end{figure*}

\begin{figure*}[t]
\centering
\subfigure[Vary $\len, k=1$, DBLPG: Running Time(ms) ]{
\includegraphics[width=\figfour]{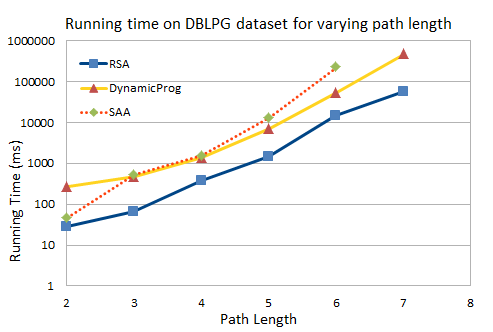}
\label{fig:LDBLPG}
}
\subfigure[Vary $\len, k=1$, DBLPG: Number of Edge Reads]{
\includegraphics[width=\figwidth]{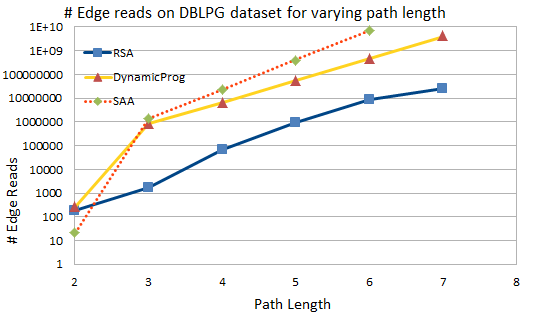}
\label{fig:EDBLPG}
}
\vspace{-10pt}
\caption{\scriptsize{Scalability results on DBLPG dataset}}
\vspace{-10pt}
\label{fig:DBLPResults}
\end{figure*}

Figure~\ref{fig:exact}(a-c) show the running time for
finding the top-1 path of various lengths for Cora, last.fm and Bay graphs.
Average node degree plays a key role in the empirical hardness of HPP. This is due to the fact that
the complexity of the problem increases exponentially with node degree.

The running time increases with $\ell$ for all algorithms, as supposed to be.
\Rank does not manage to go beyond short path lengths in most cases, the reasons
being those limitations mentioned in Section~\ref{sec:opt}. Beyond some depth
down the sorted edge list, \Rank becomes inefficient due to the increasing
complexity of the problem and the fact that it either does not manage
to construct heavy paths early on or the fact that the threshold decays slowly.
For shorter path lengths \Rank terminates earlier than \DP because it does
pruning and uses a threshold for termination while \DP explores a larger
search space. \Main performs considerably more efficient than other algorithms
except for one case on Cora graph for $\ell = 3$ where it is slightly less
efficient than SAA. We believe this is due to the fact that \Main uses a combination
of sorted and random access. Random access results in fundamental improvements
in the running time as we see in most cases. However, in some rare cases \Rank
may get lucky because it avoids random access and also the edge weights
are such that result in fast termination. Albeit, at the cost of losing
all the other experiments. All in all, our findings show that
\Main is a way more scalable and reliable algorithm. \DP behaves consistently and manages to achieve
pruning by aggregating shorter path segments early on. However, it finds
the heaviest path ending at every node and although it scales to longer lengths,
it is orders of magnitude slower than \Main. We would like to highlight again
that we use log scale on Y-axis and the difference in the running times is
considerable. All in all, there is no question about the surprising scalability of \Main
in all experiments.

Figures~\ref{fig:exact}(d-f), compare the running times of \Rank and \Main
for varying $k$ and fixed $\ell=4$, since these two are the algorithms
designed to work as top-$k$ algorithms. In all cases, \Main is more efficient and
it spends a smaller marginal running time compared to \Rank for finding top-$k$.
We observe
sudden jumps in the running time of \Rank when $k$ changes while \Main continues
to work reliably for values of $k$ less than $100$ in these experiments and
scales more smoothly.

We further compare the number of edge reads of \Rank and \Main in Figure~\ref{fig:EdgeReads},
that is a system independent notion of the running time. We observe patterns very similar to
Figures~\ref{fig:exact}(a-c). This highlights the fact that the number of edges read during
the execution is the main factor determining the running time. We observe in some cases (for small $\ell$),
\Rank reads fewer edges but spends slightly more time for execution. This is due to the
fact that \Main is performing random access using the adjacency matrix of the graph to construct
longer paths. However, \Rank performs the $Join$ operation and looks up edges in a hash
table which itself requires few milliseconds of time overall.

\begin{figure}[t]
\label{DBHeavyPaths}
\includegraphics[scale=0.7]{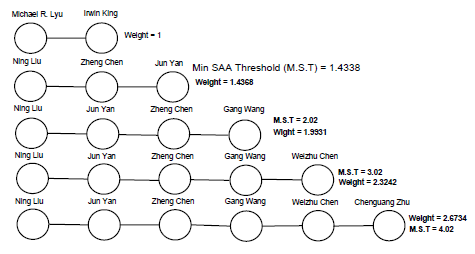}
\caption{\scriptsize{Heaviest paths of $\ell = 1$ to $\ell = 5$ on DBLPG Graph. SAA terminates for $\ell = 2$ before processing all edges at
depth $28$ (out of 9060 edges). The minimum threshold value when SAA terminates is 1.4338 and smaller
than the weight of the heaviest path of $\ell = 2 (1.4388)$. For $\ell > 2$, SAA can not terminate without processing
all edges i.e. almost equivalent to the DFS algorithm. As can be seen in Figure $7$, SAA terminates earlier than \DP for
$\ell = 2$ but when $\ell > 2$, \DP becomes more efficient since SAA behaves as naive as DFS.}}
\vspace{-10pt}
\end{figure}

Finally, Figure~\ref{fig:DBLPResults} shows the running times and the number
of edge reads on the DBLPG graph for finding the heaviest paths of different
lengths. We notice on this Graph, SAA algorithm performs more poorly compared
with the Dynamic Programming algorithm. Of course, \Main beats all algorithms in all
cases and finds the exact solution almost $10$ times faster than the rest. \eat{ SAA performs
better than the Dynamic Programming algorithm for $\ell = 2$ because it can terminate
by initiating the search using only $28$ edges. Depth $28$ is where $\theta$ becomes smaller than the weight
of the heaviest path of $\ell = 2$. For $\ell > 2$, the weight of the heaviest path is smaller
than $(\ell-1)\times w_{max}+w_{min}$ as highlighted by Limitation~\ref{obs:thresh}. In these
cases, the SAA algorithm becomes equivalent to the DFS algorithm and is less efficient than the
Dynamic Programming algorithm that boosts performance by aggregating shorter path segments early on.
This highlights the importance of investigating smarter strategies for updating $\theta$ and using
a tighter threshold which is done by our optimizations presented in Section~\ref{sec:opt}, that
led to designing the \Main algorithm.} Figures 7(b) and 8, illustrate this.

\section{Discussions}
\label{sec:discussion}
Our theoretical results prove that HPP is np-hard and inapproximable in the general case. While
it makes sense to seek more tractable problem definitions by adding constraints, still we show experimentally
that we can achieve scalability practical enough for our main algorithm to be used in a real time fashion and
be used in smart software technologies. We use a \emph{branch and bound} solution for finding the exact solution, inspired
by top-k algorithms. We use a constraint $\ell$, on edge weights and this makes our comparison for finding heavy paths,
representing significantly important sequences in the graph more fair. Comparing path weights of paths with different
lengths may not make enough sense in the \emph{\emph{general case}}. We show exact solution even for small $\ell$, if obtained, can be used in practice effectively, while other algorithms fail to report even for small $\ell$ in a timely manner. It is
obviously nice to scale to longer paths and this requires smarter strategies for reducing $\theta$ during the execution
that leads to faster search space pruning. We make a proposal for designing a network visualization tool-box with
zoom-in and zoom-out functionalities with changing $k$. We find our presented case study practical but obviously as
mentioned in Section~\ref{sec:intro}, applications are not restricted to this and there are other applications such as
those in bioinformatics. We can use high probability sequences in order to fix technology related errors such as replacing inaccurate or suspicious entries or filling in missing values in biological sequences. We can also use these high probability sequences for feature extraction from DNA which is an extremely long sequence of symbols in bioinformatics for classification
and other learning tasks. One reason we choose visualizing bibliographic networks is presenting interpretable results
for computer science audience. While there is a variety of solutions to be used in these applications, we want to
propose a solution to a well-defined problem in computer science similar to TSP, that can be significantly leveraged
for creating technologies in software industry. Another way to add constraints to the problem and make pruning
more possible is through designing reliable heuristic algorithms with constraints on the distribution of edge weights. Of course,
 this requires careful testing for estimating distributions and we can focus on this extension in the future.

\begin{figure}[t]
\label{PathWeights}
\includegraphics[scale=0.6]{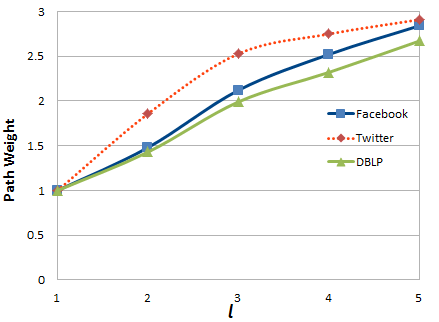}
\caption{\scriptsize{Heaviest path weights of $\ell = 1$ to $\ell = 5$ for Twitter, Facebook and DBLPG using Egonetwork circles.}}
\vspace{-10pt}
\end{figure}

 Recently, there have been advancements in the field of social networks described as Egonetworks~\footnote{\scriptsize{\url{http://snap.stanford.edu/data/index.html}}}. Egonetworks are typically
 presented as groups of people involved in different \emph{"circles"}. Our bibliographic network described in Section~\ref{sec:community},
 is similar to the notion of circles in Egonetwork where each list of authors represents a circle. In an extreme scenario
 there may be many nodes involved in each circle in domains such as concrete social networks. Any problem definition for discovering dense or significant subgraphs according to such networks can result in finding significant sub-structures. We propose an effective approach for constructing such a graph. Based on the data available we prefer to make no judgement about the significance of individuals. Our proposed solution based on finding top-k heavy paths is proven to be scalable by experiments. On the other hands we can manage to change the output using simple parameters such as $k$, that enables us to zoom-in and zoom-out. Our argument for reliability of our output is the following observation. We notice many heavy paths share the same exact shorter sub-paths. This emphasizes the importance of some collaborations in the bibliographic network around which an important and successful research community is formed. This led to our definition of community detection approach that makes use of maximal frequent subpaths for discovering such frequent short paths shared among many top-k heavy paths. We follow the same approach for
 constructing graphs from Facebook and Twitter Egonetworks. In particular, we only use circles and convert mutual presence of
 nodes in circles to define edges. We find the same exact method described in Section~\ref{sec:community}, in order to define
 edge weights. Figure~\ref{PathWeights}, reports the heaviest path weights of $\ell = 1$ to $\ell = 5$, and compares them with
 those obtained from DBLP data. We use the union of all circles first and then construct the graph. Since edge weights are all scaled to $(0-1)$, we expect heaviest paths of different length to be in the same range for all datasets. In comparison, we notice in Figure~9, that all graphs follow a similar curve for increasing $\ell$. We notice that the slope decreases with $\ell$, and it is quite likely path weight converges quickly for large $\ell$. This increases the chances of finding reliable heuristic algorithms with output close enough to the exact answer of the np-hard problem. Facebook and Twitter datasets result in paths with slightly heavier weights and we believe this is because in real life networks people with strong connections belong to many active networks while researchers may not be as socially active as ordinary people. Although the difference in path weights is small and almost negligible. The most notable facts to highlight are the similar shape of curves as well as the decreasing slope.

\section{Related work}
\label{sec:related}
\eat{We already mentioned some related work and application
domains that are relevant to \hpp in Section~\ref{sec:intro}.
In this section
we discuss those related work that have a similar
flavour to \hpp in their technical contribution and motivation.}
In \cite{Hansen:09}, Hansen and Golbeck make a case for
recommending playlists and other collections of items.
The AutoDJ system of Platt et al.~\cite{Platt:01} is one of the early
works on playlist generation
that uses acoustic features
of songs rather than a graph based representation. Random walk
methods were used in~\cite{Baluja:08}, in order to generate lists
of videos in a co-view graph. We formalized and presented our
view of generating lists using random walks in introduction for
recommending lists as an ordered package of items to users.

The \hpp in a special case was studied
recently by~\cite{Bansal:07} and they defined a notion of \emph{Stable Clusters}
for the specific application of finding persistent
chatter in the blogosphere. \eat{We generalize the problem and focus on
more sophisticated algorithms. We did not focus on clustering applications
of \hpp in this work but this is something worth investigating in the future.}
Unlike our setting, the graph associated with
their application is \len-partite and acyclic.
Absence of cycles makes their technical problem
more tractable than ours. The most efficient
algorithm presented in their work is a BFS
based method that follows a dynamic programming
approach. \eat{It is not straight forward to adapt their
algorithm for \hpp in the general case to produce paths
without cycles.} We followed dynamic programming algorithm that also avoids cycles and found this
approach extremely inefficient
compared to \Main.

Rankjoin was first proposed
by Ilyas et al.~\cite{Ilyas:03, Ilyas:04, Ilyas:05, Finger:2010, RA:11} to produce top-$k$ join tuples in
a relational database. We show how to convert Rankjoin to HPP. \eat{We identified its relevance to \hpp. More specifically,
it solves the problem on acyclic $(\len+1)$-partite graphs. Most of the literature
on Rankjoin focuses on $\ell=2$ which is not NP-hard. Furthermore, most
of the effort is spent on minimizing the number of block reads from disk and
less attention has been paid to Rankjoin from an algorithmic point of view.}
Furthermore, no one has studied Rankjoin with self-joins to the best of our knowledge.
In~\cite{Schnaitter:10}, authors provide a comprehensive summary of this algorithm and
do optimality analysis as done in the theoretical database literature. Their conclusion
is that beyond $\ell=2$, we can not guarantee any optimality results for Rankjoin. This
also matches our results regarding HPP being np-hard and inapproximable.

\section{Conclusions and Future Work}
\label{sec:concl}
Finding the \topk heaviest paths in a graph is an
interesting problem with many practical applications. We discuss the hardness
of this problem. We focus on developing
practical exact algorithms for this problem.
We use an innovative top-$k$ query processing algorithm.
We motivate the problem from several different perspectives
and discuss possible applications. We present a case study on
core community detection. Our findings suggest that our case
study can be further extended to a community detection tool
that discovers the key influential people in a network represented
by a weighted graph. Our future work will focus on more applications
of \Main, specially those in network analysis, community detection, recommender systems and bioinformatics.
We would also
like to investigate more scalable algorithms and threshold
update strategies within the same framework. We also intend
to design probabilistic and robust heuristic algorithms that
work under memory and other types of constraints such as edge distribution.


\begin{thebibliography}{10}

\bibitem{Broder:2000}
 \newblock Andrei Broader, Ravi Kumar, Farzin Maghoul, Prabhakar Raghavan, Sridhar Rajagopalan, Raymie Stata, Andrew Tomkins, Janet Wiener
 \newblock Graph Structure in the Web, Computer Networks, 2000.


\bibitem{mst1}
\newblock N. Garg. et al.
\newblock S. Arora and G. Karakostas. A 2 + epsilon approximation algorithm for the k-MST problem., Math. Prog., 2006.



\bibitem{mst2}
\newblock N. Garg. et al.
\newblock a 2-approximation for the k-mst problem in graphs., STOC, 2005.


\bibitem{Schnaitter:10}
\newblock Schnaitter, Karl and Polyzotis, Neoklis.
\newblock Optimal algorithms for evaluating rank joins in database systems, TODS, 2010.


\bibitem{Schnaitter:8}
\newblock Schnaitter, Karl and Polyzotis, Neoklis.
\newblock Evaluating rank joins with optimal cost, PODS, 2008.


\bibitem{Kang:11}
\newblock U Kang, Spiros Papadimitriou, Jimeng Sun, Hanghang Tong.
\newblock Centralities in Large Networks: Algorithms and Observations, SDM, 2011.



\bibitem{Bansal:07}
\newblock Bansal, Nilesh and Chiang, Fei and Koudas, Nick and Tompa, Frank Wm.
\newblock Seeking stable clusters in the blogosphere, VLDB, 2007.



\bibitem{Baluja:08}
\newblock Shumeet Baluja, Rohan Seth, D. Sivakumar, Yushi Jing, Jay Yagnik, Shankar Kumar, Deepak Ravichandran, Mohamed Aly,
\newblock Video suggestion and discovery for youtube: taking random walks through the view graph, WWW, 2008.



\bibitem{Fields:10}
\newblock Fields, Ben and Rhodes, Christophe and d'Inverno, Mark, 
\newblock Using Song Social Tags and Topic Models to Describe and Compare Playlists, Workshop on Music Recommendation and Discovery, 2010.


\bibitem{Hansen:09}
\newblock Derek L. Hansen and Jennifer Golbeck,
\newblock Mixing it up: recommending collections of items, CHI, 2009.


\bibitem{Platt:01}
\newblock John C. Platt, Christopher J. C. Burges, S. Swenson, C. Weare, A. Zheng.
\newblock Learning a Gaussian Process Prior for Automatically Generating Music Playlists, NIPS, 2001.


\bibitem{Hauver:01}
\newblock Hauver, D.B. and French, J.C.
\newblock Web Delivering of Music, Flycasting: using collaborative filtering to generate a playlist for online radio, 2001.


\bibitem{Hayes:02}
\newblock Conor Hayes and Padraig Cunningham and Patrick Clerkin and Marco Grimaldi, 
\newblock Programme Driven Music Radio, ECAI, 2002.

\bibitem{Barrington:09}
\newblock Luke Barrington and Reid Oda and Gert Lanckriet, 
\newblock SMARTER THAN GENIUS? HUMAN EVALUATION OF MUSIC RECOMMENDER SYSTEMS, International Society for Music Information Retrieval, 2009.


\bibitem{Cebrian:10}
\newblock Toni Cebri{\'a}n and Marc Planagum{\`a} and Paulo Villegas and Xavier Amatriain,
\newblock Music recommendations with temporal context awareness, RecSys, 2010.


\bibitem{Ekstrand:10}
\newblock Michael D. Ekstrand and Praveen Kannan and James A. Stemper and John T. Butler and Joseph A. Konstan and John Riedl,
\newblock Automatically building research reading lists, RecSys, 2010.


\bibitem{Ziegler:05}
\newblock Cai-Nicolas Ziegler and Sean M. McNee and Joseph A. Konstan and Georg Lausen, 
\newblock Improving recommendation lists through topic diversification, WWW, 2005.



\bibitem{Zhang:11}
\newblock X. Zhang and J. Chomicki,
\newblock Preference Queries over Sets, IEEE International Conference on Data Engineering (ICDE), 2011.


\bibitem{Bast:06}
\newblock Holger Bast and Debapriyo Majumdar and Ralf Schenkel and Martin Theobald and Gerhard Weikum,
\newblock IO-Top-k: Index-access Optimized Top-k Query Processing, VLDB, 2006.



\bibitem{Dice:45}
\newblock Measures of the Amount of Ecologic Association Between Species,
\newblock Lee R. Dice, Ecology, 1945.


\bibitem{Hu:08}
\newblock Hu, Yifan and Koren, Yehuda and Volinsky, Chris,
\newblock Collaborative Filtering for Implicit Feedback Datasets, ICDM, 2008.


\bibitem{yehuda-tute-RS08}
\newblock Yehuda Koen,
\newblock Tutorial on recent progress in collaborative filtering, RecSys, 2008.


\bibitem{desJardins:06}
\newblock desJardins, M. and  Eaton, E. and Wagstaff, K. L., 
\newblock Learning user preferences for sets of objects, ICML 2006.



\bibitem{Ilyas:05}
\newblock Li, Chengkai and Chang, Kevin Chen-Chuan and Ilyas, Ihab F. and Song, Sumin,
\newblock RankSQL: Query Algebra and Optimization for Relational Top-{\it k} Queries, SIGMOD, 2005.


\bibitem{Ilyas:04}
\newblock Ilyas, Ihab F. and Shah, Rahul and Aref, Walid G. and Vitter, Jeffrey Scott and Elmagarmid, Ahmed K.,
\newblock Rank-aware query optimization, SIGMOD, 2004.


\bibitem{Ilyas:03}
\newblock Ihab F. Ilyas and Walid G. Aref and Ahmed K. Elmagarmid,
\newblock Supporting Top-{\it k} Join Queries in Relational Databases, VLDB, 2003.



\bibitem{DD}
\newblock Jie Tang, Jing Zhang, Limin Yao, Juanzi Li, Li Zhang, and Zhong Su. ArnetMiner,
\newblock Extraction and Mining of Academic Social Networks., SIGKDD, 2008.


\bibitem{Gibson:05}
\newblock Gibson, David and Kumar, Ravi and Tomkins, Andrew,
\newblock Discovering large dense subgraphs in massive graphs, VLDB, 2005.


\bibitem{topk-prob}
\newblock M. Theobald, Gerhard Weikum, Ralf Schenkel,
\newblock Top-k Query Evaluation with Probabilistic Guarantees., VLDB, 2004.


\bibitem{Finger:2010}
\newblock Jonathan Finger, Neoklis Polyzotis,
\newblock Robust and Efficient Algorithms for Rank Join Evaluation, SIGMOD, 2009.



\bibitem{Johnson:2004}
\newblock Johnson, David and Krishnan, Shankar and Chhugani, Jatin and Kumar, Subodh and Venkatasubramanian, Suresh,
\newblock Compressing large boolean matrices using reordering techniques, VLDB, 2004.



\bibitem{Lappas:09}
\newblock Lappas, Theodoros and Liu, Kun and Terzi, Evimaria,
\newblock Finding a team of experts in social networks, KDD, 2009.


\bibitem{RA:11}
\newblock Davide Martinenghi and Marco Tagliasacchi, 
\newblock Cost-Aware Rank Join with Random and Sorted Access, TKDE, 2011.


\bibitem{TA}
\newblock Ronald Fagin and Amnon Lotem and Moni Naor, 
\newblock Optimal Aggregation Algorithms for Middleware, J. of Comp. Sys. Sc., 2003.


\bibitem{Arora:06}
\newblock Sanjeev Arora and George Karakostas,
\newblock A 2 + {$\epsilon$} approximation algorithm for the {\it k}-{MST} problem, Math. Prog., 2006.



\bibitem{Held:62}
\newblock A Dynamic Programming Approach to Sequencing Problems,
\newblock Michael Held and Richard M. Karp, 1962.


\bibitem{Garg:05}
\newblock Naveen Garg,
\newblock Saving an epsilon: a 2-approximation for the k-MST problem in graphs, STOC, 2005.


\bibitem{Choudhury:10}
\newblock M. Choudhury, M. Feldman, S. Amer-Yahia, N. Golbandi, R. Lempel, and C. Yu.,
\newblock Automatic construction of travel itineraries using social breadcrumbs., HT, 2010.



\bibitem{Santo:09}
\newblock Community Detection in Graphs,
\newblock Santo Fortunato, Physics Reports, 2009.


\bibitem{Jamali:09}
\newblock Mohsen Jamali, Martin Ester
\newblock TrustWalker: a random walk model for combining trust-based and item-based recommendation., KDD, 2009.

\bibitem{Spyropoulou:09}
\newblock Eirini Spyropoulou, Tijl De Bie,
\newblock Interesting Multi-Relational Patterns., ICDM, 2011.



\bibitem{Sparsification:11}
\newblock Michael Mathioudakis, Francesco Bonchi, Carlos Castillo, Aristides Gionis, Antti Ukkonen,
\newblock Sparsification of Influence Networks., KDD, 2011.



\bibitem{FPGrowth}
\newblock Jiawei Han, Jian Pei, Yiwen Yin,
\newblock Mining frequent patterns without candidate generation., SIGMOD, 2000.


\bibitem{Plumbers}
\newblock Gerhard Weikum,
\newblock Database Researchers: Plumbers or Thinkers?, EDBT, 2011.

\bibitem{HPPTR:11}
\newblock Mohammad Khabbaz, Smriti Bhagat, Laks V. S. Lakshmanan,
\newblock Finding Heavy Paths in Graphs: A Rank Join Approach., CoRR abs/1112.1117, 2011.



\bibitem{Hansen:2009}
\newblock Hansen, Derek L. and Golbeck, Jennifer,
\newblock Mixing it up: recommending collections of items, CHI 2009.


\bibitem{hp-techReport11}
\newblock Mohammad Khabbaz and Smriti Bhagat and Laks V. S. Lakshmanan,
\newblock Finding Heavy Paths in Graphs: A Rank Join Approach, Dept. of CS, UBC.


\bibitem{chek-focs05}
\newblock Chandra Chekuri and Martin P´al,
\newblock A recursive greedy algorithm for walks in directed graphs, FOCS, 2005.

\end{thebibliography}
\end{document}